\documentclass[11pt]{article}

\usepackage[utf8]{inputenc}
\usepackage[margin=1in]{geometry}
\usepackage{amsmath,amssymb,amsfonts,amsthm,mathtools}
\usepackage{bm}
\usepackage{url}
\usepackage{booktabs}
\usepackage{graphicx}
\usepackage{algorithm}
\usepackage{algpseudocode}
\usepackage{threeparttable}
\usepackage[figuresright]{rotating}
\usepackage[authoryear,round]{natbib}
\usepackage{caption}
\usepackage{subcaption}
\usepackage{xcolor}
\usepackage{setspace}
\usepackage{array}
\usepackage{multirow}
\usepackage{longtable}
\usepackage[colorlinks=true,linkcolor=blue,citecolor=blue,urlcolor=blue]{hyperref}
\usepackage[nameinlink,noabbrev]{cleveref}

\newcommand{\tr}{\operatorname{tr}}
\newcommand{\vecop}{\operatorname{vec}}

\newcommand{\diag}{\operatorname{diag}}

\newcommand{\Cov}{\operatorname{Cov}}

\newcommand{\I}{\mathbf{I}}
\newcommand{\A}{\mathbf{A}}
\newcommand{\B}{\mathbf{B}}
\newcommand{\Q}{\mathbf{Q}}
\newcommand{\rank}{\operatorname{rank}}
\newcommand{\one}{\mathbf{1}}

\newcommand{\bb}{\mathbf{b}}
\newcommand{\D}{\mathbf{D}}
\newcommand{\V}{\mathbf{V}}
\newcommand{\grad}{\operatorname{grad}}

\newtheorem{proposition}{Proposition}
\newtheorem{lemma}{Lemma}
\newtheorem{theorem}{Theorem}
\newtheorem{corollary}{Corollary}
\newtheorem{assumption}{Assumption}

\Crefname{table}{Web Table}{Web Tables}
\Crefname{figure}{Web Figure}{Web Figures}
\Crefname{algorithm}{Web Algorithm}{Web Algorithms}
\Crefname{section}{Web Appendix}{Web Appendix}
\Crefname{subsection}{Web Appendix}{Web Appendix}

\title{Hierarchical Probabilistic Principal Component Analysis of Longitudinal Data}

\author{
Xinyu Zhang \and Ameer Qaqish \and D.Y. Lin$^{*}$ \and Didong Li$^{*}$ \\
Department of Biostatistics, University of North Carolina at Chapel Hill\\
Chapel Hill, North Carolina, U.S.A.\\
*Corresponding emails: lin@bios.unc.edu, didongli@unc.edu
}

\date{\today}

\begin{document}

\maketitle

\begin{abstract}
In many longitudinal studies, a large number of variables are measured repeatedly over time, with substantial missing data. Existing methods, such as probabilistic principal component analysis (PPCA), are ill-equipped to handle such incomplete, high-dimensional longitudinal data, as they fail to account for the nested sources of variation and temporal dependency inherent in repeated measures. We introduce hierarchical probabilistic principal component analysis (HPPCA), a two-level probabilistic factor model that explicitly separates between-subject variance from time-varying within-subject dynamics. The within-subject latent factors are modeled by a Gaussian process. We develop an EM algorithm to handle missing data and flexible covariance kernels, accelerated by computationally efficient initializers. Simulation studies demonstrated that HPPCA robustly recovers model parameters subspaces and substantially outperforms both standard PPCA and multivariate functional PCA in imputation accuracy, even under heavy missingness and model misspecification. An application to the long COVID symptoms in the Researching COVID to Enhance Recovery adult cohort revealed that HPPCA effectively captured the data's hierarchical structure and its learned features significantly improved the prediction of clinical outcomes and the recovery of masked clinical records compared to exisiting methods.
\end{abstract}

\noindent\textbf{Keywords:}
Dimension reduction; EM algorithm; Gaussian process; imputation; missing data; repeated measures.


\section{Introduction}

The analysis of high-dimensional longitudinal data, where a large number of variables are measured repeatedly over time on many subjects, is a central task in clinical and epidemiological research \citep{fitzmaurice2008longitudinal}. A key challenge lies in the complex correlation structure inherent in such data, which arises from the hierarchical nesting of observations within subjects and the temporal dependency of repeated measurements. Furthermore, these datasets are often characterized by substantial missingness and irregular, subject-specific visit schedules. A principled low–dimensional representation must therefore disentangle the between– and within–subject sources of variation, accommodate temporal correlation, and remain robust in the presence of incomplete observations.

Principal component analysis (PCA) is a cornerstone for dimension reduction, projecting correlated variables onto orthogonal directions that minimize reconstruction error \citep{Jolliffe2002Principal}. Despite its wide use, classical PCA lacks a probabilistic foundation and is not equipped to handle missing data. The development of probabilistic principal component analysis (PPCA) \citep{tipping1999probabilistic} provided a crucial generative model, casting principal axes as a maximum marginalized likelihood solution to a latent factor model and naturally accommodating missing data through the expectation-maximization (EM) algorithm \citep{dempster1977maximum, NIPS1997_d9731321}. 

However, the latent factors in PPCA are assumed to be independent and follow the standard normal distribution, limiting the model’s ability to represent data with intrinsic temporal or spatial correlation. The generalized PPCA (GPPCA) \citep{gu2020generalized} addresses this issue by modeling the latent factors as Gaussian processes (GPs), which provides a tailored solution for time-series or spatial data \citep{williams2006gaussian}. Yet, neither PPCA nor GPPCA targets the nested nature of longitudinal data in which total covariance arises from distinct sources at different levels. 

Related ideas have also been developed in the functional data analysis literature. Multivariate functional principal component analysis (mFPCA) has been used to characterize joint temporal variation across multiple outcomes \citep{chen2017quantifying,Happ2018multivariate}, while multilevel functional principal component analysis decomposes variation from repeated measurements into between-subject and within-subject components \citep{di2009multilevel,di2014multilevel}. A related probabilistic approach is dynamic PPCA \citep{nyamundanda2014dynamic}, which introduces temporal dependence through time-varying latent and residual variance processes, but does not target a hierarchical decomposition of between-subject and within-subject covariance under irregular observation schedules. Functional PCA approaches typically rely on the estimation of covariance or related inner-product structures with a series of basis functions, which can be difficult in settings with sparse and irregular follow-up and substantial item-level missingness \citep{yao2005functional}. Moreover, jointly accommodating multivariate and multilevel dependence remains methodologically challenging.

In this paper, we propose hierarchical probabilistic principal component analysis (HPPCA), extending PPCA to longitudinal or nested data. To make inference under irregular observation schedules and arbitrary item-level missingness, we develop an EM algorithm that  accommodates general temporal kernels. Furthermore, to accelerate convergence, we derive computationally efficient maximum likelihood estimators in a complete balanced-panel special case, which serve as initializers for the EM procedure. HPPCA jointly captures interpretable between‑subject traits and within‑subject dynamics while accommodating missingness and temporal structures within subjects. The latent space can be used for downstream analysis, and the model also provides a way to impute missing data using the posterior distribution.

The remainder of this paper is organized as follows. Section~\ref{sec:model} introduces the formulation of the HPPCA model, the EM algorithm and the initializers. Section~\ref{sec:simulation} compares the performance of HPPCA and three existing dimension reduction methods through extensive simulation studies under different settings. Section~\ref{sec:recover} describes the application of our model to the Researching COVID to Enhance Recovery (RECOVER) adult cohort. Section~\ref{sec:discussion} provides concluding remarks and discusses future directions.

\section{Methods}\label{sec:model}

\subsection{Hierarchical Probabilistic Principal Component Analysis (HPPCA)}
In longitudinal studies, data collection is often subject to irregular follow-up schedules and varying numbers of visits across subjects. We consider a longitudinal cohort study with $n$ subjects, where the $i$th subject is observed at a subject-specific set of survey occasions indexed by $\mathcal{O}_i$. Let $J_i = |\mathcal{O}_i|$ denote the total number of visits for the $i$th subject, and let $t_{ij}$ denote the specific observation time of survey $j \in \mathcal{O}_i$. 

For the $i$th subject at their $j$th survey, we observe a $p$-dimensional vector of centered features, denoted by $\bm{Y}_{ij} \in \mathbb{R}^p$. To explicitly disentangle the time-invariant between-subject variation from the time-varying within-subject dynamics, we propose the following HPPCA model:
\begin{equation}\label{eq:model}
  \bm{Y}_{ij} \;=\; \bm{W}_1 \bm{Z}_{1,i} + \bm{W}_2 \bm{Z}_{2,ij} + \sigma\,\bm{\varepsilon}_{ij},
\end{equation}
where $\bm{W}_1 \in \mathbb{R}^{p \times d_1}$ and $\bm{W}_2 \in \mathbb{R}^{p \times d_2}$ are the factor loading matrices at the subject level and subject-survey level, respectively ($d_1 \leq p$, $d_2 \leq p$). The error term is modeled as isotropic Gaussian noise, $\bm{\varepsilon}_{ij} \sim \mathcal{N}(\bm{0}, \bm{I}_p)$. The unobserved latent factors capture the nested hierarchy of the data. The $d_1$-dimensional subject-level latent traits $\bm{Z}_{1,i}$ represent static  characteristics and are assumed to follow a standard multivariate normal distribution $\mathcal{N}(\bm{0}, \bm{I}_{d_1})$. To model the temporal correlation of the $d_2$-dimensional subject-survey factors $\bm{Z}_{2,ij}$, we assume that for each latent dimension $r=1,\dots,d_2$, the subject-specific trajectory vector $\bm{z}_{2,i}^{(r)} = (Z_{2,ij}^{(r)})_{j \in \mathcal{O}_i}^\top \in \mathbb{R}^{J_i}$ follows a zero-mean Gaussian process (GP):
\begin{equation}\label{eq:gp_prior}
\bm{z}_{2,i}^{(r)} \sim \mathcal{N}\big(\bm{0}, \bm{\Sigma}_i^{(r)}\big).
\end{equation}
 The $(j, j')$th element of the temporal covariance matrix $\bm{\Sigma}_i^{(r)}$ can be parameterized with a kernel function
$(\bm{\Sigma}_i^{(r)})_{j,j'} =K(t_{i j}, t_{i j'}; \ell^{(r)})$, where $\ell^{(r)}$ is the kernel parameter for latent dimension $r$. The components $\bm{Z}_{1,i}$, $\bm{z}_{2,i}^{(r)}$, and $\bm{\varepsilon}_{ij}$ are mutually independent across all subjects and dimensions.

Define the positive semi-definite matrices $\bm{A}=\bm{W}_1\bm{W}_1^\top$ and $\bm{B}=\bm{W}_2\bm{W}_2^\top$, where $\text{rank}(\A) \leq d_1$ and $\text{rank}(\B) \leq d_2$. By stacking the observations for the $i$th subject into a single vector $\bm{y}_i = \operatorname{vec}([\bm{Y}_{ij}]_{j \in \mathcal{O}_i}) \in \mathbb{R}^{p J_i}$, the marginal distribution of $\bm{y}_i$ is multivariate normal with mean zero and the following covariance matrix:
\begin{equation}\label{eq:stacked-cov}
\operatorname{Cov}(\bm{y}_i) \;=\; \one_{J_i}\one_{J_i}^\top \otimes \bm{A} \;+\; \sum_{r = 1}^{d_2} \bm{\Sigma}_i^{(r)} \otimes (\bm{w}_{2r} \bm{w}_{2r}^\top) \;+\; \sigma^2 \bm{I}_{p J_i},
\end{equation}
where $\operatorname{vec}(\cdot)$ denotes the vectorization operator that stacks the columns of a matrix into a single column vector, $\otimes$ denotes the Kronecker product, $\bm{1}_{J_i}$ represents a $J_i$-dimensional vector of all ones, $\bm{I}_{p J_i}$ is the identity matrix of dimension $p J_i$, and $\bm{w}_{2r}$ is the $r$th column of the loading matrix $\bm{W}_2$.

\paragraph{Temporal Covariance Kernels}
To encode the decay of within-subject correlation over time, we consider two stationary kernels for the latent processes:
\begin{align*}
&\text{Radial Basis Function (RBF): } 
K_{\mathrm{RBF}}(t,t';\ell)\;=\;\exp\!\left(-\frac{(t-t')^2}{2\ell^2}\right), \qquad \ell>0,\\[2mm]
&\text{Mat\'ern } \nu=5/2: 
K_{5/2}(t,t';\ell)\;=\;\left(1+\frac{\sqrt{5}\,r}{\ell}+\frac{5r^2}{3\ell^2}\right)\exp\!\left(-\frac{\sqrt{5}\,r}{\ell}\right), 
\quad r=|t-t'|,\ \ell>0.
\end{align*}
The RBF kernel produces infinitely smooth latent trajectories, whereas the Mat\'ern-$5/2$ kernel accommodates twice differentiable dynamic fluctuations. The marginal variance of the GPs is fixed at $1$ to ensure parameter identifiability, absorbing the amplitude scale of the within-subject variation naturally into $\bm{W}_2$.

\paragraph{Identifiability}
As in standard latent factor models, the loading matrices are not individually identifiable. Accordingly, inference is expressed through covariance-level quantities. In particular, $
\bm{A}=\bm{W}_1\bm{W}_1^\top$ and $\bm{B}=\bm{W}_2\bm{W}_2^\top$ are identifiable with a shared kernel parameter $\ell$. 
When dimension-specific kernel parameters $\ell^{(r)}$ are allowed, the latent dimensions are identifiable only up to relabeling, so the parameters $\{(\ell^{(r)},\bm{w}_{2r}\bm{w}_{2r}^\top)\}_{r=1}^{d_2}$ are identified only up to permutation.

\subsection{EM Algorithm}
In longitudinal studies, item-level missingness is common. We assume an ignorable missingness mechanism for likelihood-based inference; specifically, data are missing at random (MAR) conditional on the observed entries \citep{rubin1976inference,little2019statistical}. For model fitting, we propose using the EM algorithm, which treats latent variables and missing entries in $\bm{Y}$ as missing data, produces numerically stable updates for the loadings and noise variance, and supports general positive‑definite temporal covariance matrices $\bm{\Sigma}(\ell)$. Write $\bm{\theta} = (\bm{W}_1, \bm{W}_2, \sigma^2, \{\ell^{(r)}\}_{r=1}^{d_2})$.

To deal with missingness, for each $(i,j)$, we split vectors and matrices by observed versus missing rows: $\bm{Y}_{ij}=\big(\bm{Y}_{ij}^{(o)\top},\ \bm{Y}_{ij}^{(m)\top}\big)^\top$. The loading matrices $\bm{W}_1$ and $\bm{W}_2$ are also permuted by rows into two parts respectively:
\[
\begin{pmatrix}
    \bm{Y}_{ij}^{(o)} \\
    \bm{Y}_{ij}^{(m)}
\end{pmatrix} = \begin{pmatrix}
    \bm{W}_1^{(o, ij)}\\
    \bm{W}_1^{(m, ij)}
\end{pmatrix} \bm{Z}_{1, i} +  \begin{pmatrix}
    \bm{W}_2^{(o, ij)}\\
    \bm{W}_2^{(m, ij)}
\end{pmatrix} \bm{Z}_{2, ij} + \sigma \bm{\varepsilon}_{ij}.
\]
 We define the complete latent state and complete observed design for the $i$th subject by stacking over surveys $j\in \mathcal{O}_i=\{j_1, \cdots, j_{J_i}\}$,
\[
\bm{Z}_i \;=\; \begin{bmatrix}\bm{Z}_{1,i}\\\bm{Z}_{2,ij_1} \\\vdots\\ \bm{Z}_{2,ij_{J_i}}\end{bmatrix}, \qquad
\bm{W}_i^{(o)} \;=\; \begin{bmatrix}
\bm{W}_1^{(o,ij_1)} & \bm{W}_2^{(o,ij_1)}\bm{S}_{i1}\\
\vdots & \vdots\\
\bm{W}_1^{(o,ij_{J_i})} & \bm{W}_2^{(o,ij_{J_i})}\bm{S}_{iJ_i}
\end{bmatrix},\qquad
\bm{Y}_i^{(o)} \;=\; \begin{bmatrix}\bm{Y}_{ij_1}^{(o)}\\ \vdots\\ \bm{Y}_{ij_{J_i}}^{(o)}\end{bmatrix},
\]
where $\bm{S}_{ih} = \bm{e}_h^\top \otimes \bm{I}_{d_2} \in \mathbb{R}^{d_2 \times d_2 J_i}$ is a matrix that selects the hth survey-specific latent vector from the stacked vector $\bm{Z}_{2,i} = (\bm{Z}_{2,ij_1}^\top, \cdots, \bm{Z}_{2,ij_{J_i}}^\top)^\top$, and $\bm{e}_h \in \mathbb{R}^{J_i}$ is the canonical basis vector for the $h$th observed survey, $h = 1, \cdots, J_i$ is the stacked index.

Up to additive constants independent of the parameters, the complete-data log-likelihood for the $i$th subject is
\begin{align}\label{eq:complete_lik}
    l_{i}(\bm{\theta}) =& -\frac{pJ_i}{2} \log(2\pi \sigma^2) - \frac{1}{2\sigma^2}\sum_{j \in \mathcal{O}_i}\|\bm{Y}_{ij} - \bm{W}_1 \bm{Z}_{1, i} - \bm{W}_2 \bm{Z}_{2, ij}\|_2^2 \nonumber \\
    & - \frac{1}{2} \bm{Z}_{1, i}^\top\bm{Z}_{1, i} - \frac{1}{2}\sum_{r = 1}^{d_2} \Big\{\log\det\bm{\Sigma}_i^{(r)} + \bm{z}_{2, i}^{(r)\top} (\bm{\Sigma}^{(r)}_i)^{-1} \bm{z}_{2, i}^{(r)} \Big\}.
\end{align}
The total complete data log-likelihood is $l(\bm{\theta}) = \sum_{i=1}^n l_{i}(\bm{\theta})$.

In the E-step, we compute the expected value of the complete-data log-likelihood given the observed data $\bm{Y}^{(o)} = \{\bm{Y}_i^{(o)}\}_{i=1}^n$ and current parameter estimates $\bm{\theta}^{(t)}$. We define the $Q$-function:
\begin{equation}
    Q(\bm{\theta} \mid \bm{\theta}^{(t)}) = \sum_{i=1}^n \mathbb{E}\big\{\, l_{i}(\bm{\theta}) \mid \bm{Y}_i^{(o)}, \bm{\theta}^{(t)}\,\big\}.
\end{equation}
To evaluate the $Q$-function, we require the posterior distribution of the latent variables, $\bm{Z}_i \mid \bm{Y}_{i}^{(o)}, \bm{\theta}^{(t)}$. Due to the conjugate Gaussian formulation, this posterior is multivariate normal, $\mathcal{N}(\bm{\mu}_{Z_i}, \bm{\Sigma}_{Z_i})$. 

Let $\bm{\Lambda}_i = \mathrm{blockdiag}\big(\bm{I}_{d_1}, \bm{\Omega}_{2,i}\big)$ be the block-diagonal precision matrix of the structural priors, where $\bm{\Omega}_{2,i} = \sum_{r=1}^{d_2} (\bm{\Sigma}_i^{(r)})^{-1} \otimes (\bm{e}_r \bm{e}_r^\top)$, with $\bm{e}_r \in \mathbb{R}^{d_2}$ being the canonical basis vector for latent dimension $r$. The exact posterior covariance matrix and mean vector are given by:
\begin{align}
    \bm{\Sigma}_{Z_i} &= \left\{ \bm{\Lambda}_i + \frac{1}{\sigma^2} \big(\bm{W}_i^{(o)}\big)^\top \bm{W}_i^{(o)} \right\}^{-1}, \label{eq:post_var} \\
    \bm{\mu}_{Z_i} &= \frac{1}{\sigma^2} \bm{\Sigma}_{Z_i} \big(\bm{W}_i^{(o)}\big)^\top \bm{Y}_i^{(o)}. \label{eq:post_mean}
\end{align}
We can then easily extract the expected sufficient statistics $\langle \bm{Z}_i \rangle = \bm{\mu}_{Z_i}$ and $\langle \bm{Z}_i \bm{Z}_i^\top \rangle = \bm{\mu}_{Z_i}\bm{\mu}_{Z_i}^\top + \bm{\Sigma}_{Z_i}$, where we use $\langle \rangle$ to denote the conditional expectation. In addition, we calculate the conditional expectations of the missing data $\langle \bm{Y}_{ij}^{(m)} \rangle$, which provide optimal model-based imputations for unobserved responses, alongside the cross-moment expectations $\langle \bm{Y}_{ij}^{(m)}\bm{Y}_{ij}^{(m)\top} \rangle$, $\langle \bm{Y}_{ij}^{(m)}\bm{Z}_{1,i}^\top \rangle$, and $\langle \bm{Y}_{ij}^{(m)}\bm{Z}_{2,ij}^{\top} \rangle$ (detailed in Web Appendix A).

In the M-step, we maximize $Q(\bm{\theta} \mid \bm{\theta}^{(t)})$ with respect to $\bm{\theta}$. Setting the derivatives of the $Q$-function with respect to $\bm{W}_1$ and $\bm{W}_2$ to zero yields closed-form generalized least-squares updates. 

Let $\widetilde{\bm{G}}_{ij}^{(1)} = \mathbb{E}[\bm{Y}_{ij} \bm{Z}_{1,i}^\top \mid \bm{Y}_i^{(o)}]$ and $\widetilde{\bm{G}}_{ij}^{(2)} = \mathbb{E}[\bm{Y}_{ij} \bm{Z}_{2,ij}^\top \mid \bm{Y}_i^{(o)}]$ represent the pseudo-data cross-moment matrices, where the missing components are evaluated through their exact conditional cross-moments, correctly capturing the posterior covariance between the missing features and the latent factors \citep{nyamundanda2010probabilistic}. The loading matrices are jointly updated by solving the augmented normal equations:
\begin{align}\label{eq:W_update}
\begin{pmatrix}
    \widetilde{\bm{W}}_1 & \widetilde{\bm{W}}_2
\end{pmatrix}  = 
\left( \sum_{i = 1}^n \sum_{j \in \mathcal{O}_i} 
\begin{bmatrix}
     \widetilde{\bm{G}}_{ij}^{(1)} & 
     \widetilde{\bm{G}}_{ij}^{(2)}
\end{bmatrix} \right)
 \left( \sum_{i=1}^n \sum_{j \in \mathcal{O}_i} 
 \left\langle \begin{bmatrix} \bm{Z}_{1,i} \\ \bm{Z}_{2,ij} \end{bmatrix} \begin{bmatrix} \bm{Z}_{1,i} \\ \bm{Z}_{2,ij} \end{bmatrix}^\top \right\rangle 
 \right)^{-1}.
\end{align} 
The measurement error variance $\sigma^2$ is updated by computing the expected reconstruction error across all observations:
\begin{equation}\label{eq:sigma_update}
    \widetilde \sigma^2 = \frac{1}{p\sum_{i=1}^n J_i}\sum_{i = 1}^n \sum_{j \in \mathcal{O}_i}\langle \|\bm{Y}_{ij} - \widetilde{\bm{ W}}_1 \bm{Z}_{1, i} - \widetilde{\bm{ W}}_2 \bm{Z}_{2, ij}\|_2^2 \rangle.
\end{equation}
We take a Newton-Raphson step utilizing the exact gradient and Hessian with respect to the kernel length-scale parameters $\ell^{(r)}$. The proofs for the M-step updates and matrix calculus for the kernel parameters are provided in Web Appendix A. 

We iterate between the E-step and M-step until the parameter estimates converge. Let the maximum relative error between two matrices $\bm{M}_1$ and $\bm{M}_2$ be defined as:
\[
\text{maxRelErr}(\bm{M}_1, \bm{M}_2) = \max_{k, r} \frac{|\bm{M}_{1,kr} - \bm{M}_{2, kr}|}{|\bm{M}_{2, kr}| + \epsilon},
\]
where $\epsilon = 10^{-12}$ is a small constant added to prevent division by zero. The algorithm is deemed to converge when the maximum relative changes in $\bm{W}_1\bm{W}_1^\top$, $\bm{W}_2\bm{W}_2^\top$, the noise variance $\sigma^2$ and the kernel parameters (if applicable) all fall below a predefined tolerance level. The complete EM algorithm for estimating the HPPCA model is summarized in Algorithm \ref{alg:hppca-em}.

Computationally, the E-step requires solving an independent $(d_1+d_2J_i)\times(d_1+d_2J_i)$ linear system for the $i$th subject, whereas the loading update in the M-step involves only a global $(d_1+d_2)\times(d_1+d_2)$ system. Since $d_1$ and $d_2$ are much smaller than $p$ and $J_i$ is typically modest, the resulting computations remain tractable and can be parallelized across subjects in the E-step. 

\begin{algorithm}[!htbp]
\caption{\textbf{EM Algorithm for HPPCA}}
\label{alg:hppca-em}
\begin{algorithmic}[1]
\State \textbf{Input:} Centered responses $\{\bm{Y}_{ij}\}$, dimensions $(d_1,d_2)$, kernel type, survey times $\{t_{ij}\}$, tolerance \texttt{tol}, max iterations \texttt{maxit}.
\State \textbf{Output:} Loadings $(\widehat{\bm{W}}_1, \widehat{\bm{W}}_2)$, noise $\hat{\sigma}^2$, length-scales $\{\hat{\ell}^{(r)}\}$, posteriors $\{\widehat{\bm{Z}}_{1,i}, \widehat{\bm{Z}}_{2,ij}\}$, and imputations $\widehat{\bm{Y}}_{ij}^{(m)}$.
\State \textbf{Initialize:} $\bm{W}_1, \bm{W}_2, \sigma^2, \{\ell^{(r)}\}$; set $iter \leftarrow 0$, $eps \leftarrow \infty$.
\While{$eps \ge \texttt{tol}$ \textbf{and} $iter < \texttt{maxit}$}
\State \textbf{E-step} (Parallel for $i=1,\ldots,n$): 
\State \quad Construct posterior covariance $\bm{\Sigma}_{Z_i}$ (Eq. \ref{eq:post_var}) and mean $\bm{\mu}_{Z_i}$ (Eq. \ref{eq:post_mean}).
\State \quad Calculate expected sufficient statistics $\langle \bm{Z}_i \rangle, \langle \bm{Z}_i \bm{Z}_i^\top \rangle, \langle \bm{Y}_{ij}^{(m)} \rangle, \langle \bm{Y}_{ij}^{(m)}\bm{Y}_{ij}^{(m)\top} \rangle, \langle \bm{Y}_{ij}^{(m)}\bm{Z}_{1,i}^\top \rangle$, $ \langle \bm{Y}_{ij}^{(m)}\bm{Z}_{2,ij}^{\top} \rangle$ and  matrices $\widetilde{\bm{G}}_{ij}^{(1)}, \widetilde{\bm{G}}_{ij}^{(2)}$.
\State \textbf{M-step:} 
\State \quad Update $\widetilde{\bm{W}}_1, \widetilde{\bm{W}}_2$ by solving the augmented normal equations (Eq. \ref{eq:W_update}).
\State \quad Update noise variance $\widetilde{\sigma}^2$ (Eq. \ref{eq:sigma_update}).
\State \quad \textbf{Kernel update (if applicable):} For $r=1,\ldots,d_2$, take a Newton-Raphson step to update $\ell^{(r)}$ using analytical gradients and Hessians.
\State \textbf{Convergence:} 
\State \quad $eps \leftarrow \max \!\Big\{\text{maxRelErr}(\bm{W}_1\bm{W}_1^\top, \widetilde{\bm{W}}_1\widetilde{\bm{W}}_1^\top),\ \text{maxRelErr}(\bm{W}_2\bm{W}_2^\top, \widetilde{\bm{W}}_2\widetilde{\bm{W}}_2^\top),$
\State \quad \quad \quad \quad \quad \quad \ \ $\frac{|\widetilde{\sigma}^2-\sigma^2|}{\sigma^2+\epsilon},\ \max_{r} \frac{|\widetilde{\ell}^{(r)}-\ell^{(r)}|}{\ell^{(r)}+\epsilon} \Big\}$.
\State \quad $(\bm{W}_1, \bm{W}_2, \sigma^2, \ell^{(r)}) \leftarrow (\widetilde{\bm{W}}_1, \widetilde{\bm{W}}_2, \widetilde{\sigma}^2, \widetilde{\ell}^{(r)})$; \quad $iter \leftarrow iter + 1$.
\EndWhile
\end{algorithmic}
\end{algorithm}

\subsection{Special-Case Initializers under Shared Temporal Covariance}
\label{sec:init-cases}

The EM algorithm converges more rapidly when it is initialized with values near the final solution. To obtain computationally efficient deterministic starting values and intuition of the HPPCA loadings, we consider a balanced surrogate problem in which subjects are observed on a common grid of $J\ge 2$ visits without missingness and the within-subject latent dimensions share a common temporal covariance matrix $\bm{\Sigma}$. The resulting estimators are used to initialize the general EM algorithm and also provide intuition for how HPPCA separates the between-subject and within-subject covariance structures.

Under this surrogate model, 
\[
\bm y_i\stackrel{\text{i.i.d.}}{\sim}
\mathcal N\!\left(
\bm 0,\,
\one_J\one_J^\top \otimes \A
+
\bm{\Sigma}\otimes \B
+
\sigma^2 \I_{pJ}
\right),
\qquad i=1,\ldots,n.
\]

\begin{assumption}[Shared temporal eigenspace]
\label{ass:ERS}
The temporal covariance matrix $\bm{\Sigma}$ is symmetric positive semidefinite and satisfies $\bm{\Sigma}\one_J=\lambda_1\one_J$ for some $\lambda_1\ge 0$. Equivalently, there exists an orthogonal matrix $\bm U_T=[\bm u_1,\bm U_\perp]$, where $\bm u_1=\one_J/\sqrt{J}$, such that $\bm U_T^\top \bm{\Sigma}\bm U_T=\diag(\lambda_1,\lambda_2,\ldots,\lambda_J)$, and $\bm U_T^\top(\one_J\one_J^\top)\bm U_T=\diag(J,0,\ldots,0)$.
\end{assumption}

Define the rotated time features $\widetilde{\bm Y}_i=\bm Y_i\bm U_T=[\widetilde{\bm Y}_{i1},\ldots,\widetilde{\bm Y}_{iJ}]$, together with
$\bm S_t=n^{-1}\sum_{i=1}^n \widetilde{\bm Y}_{it}\widetilde{\bm Y}_{it}^\top$ ($t=1,\ldots,J$), and $\overline{\bm S}_c=(J-1)^{-1}\sum_{t=2}^J \bm S_t$.

\begin{lemma}[Likelihood decomposition]
\label{lem:ERS-deco}
Under Assumption~\ref{ass:ERS}, the rotated columns are mutually independent, with
\[
\widetilde{\bm Y}_{i1}\sim\mathcal N\!\left(\bm 0,\,
J\A+\lambda_1\B+\sigma^2 \I_p\right),
\]
and
\[
\widetilde{\bm Y}_{it}\sim\mathcal N\!\left(\bm 0,\,
\lambda_t\B+\sigma^2 \I_p\right),
\qquad t=2,\ldots,J.
\]
Consequently, up to an additive constant, the negative log-likelihood decomposes as
\[
\mathcal L(\A,\B,\sigma^2)
=
\mathcal L_A(\A\mid \B,\sigma^2)+\mathcal L_B(\B,\sigma^2),
\]
where
\begin{align}
\mathcal L_A
&=
\frac{n}{2}\Big[
\log\det(J\A+\lambda_1\B+\sigma^2 \I_p)
+
\tr\!\big\{\bm S_1(J\A+\lambda_1\B+\sigma^2 \I_p)^{-1}\big\}
\Big],
\label{eq:Apart-log}
\\
\mathcal L_B
&=
\frac{n}{2}\sum_{t=2}^J
\Big[
\log\det(\lambda_t\B+\sigma^2 \I_p)
+
\tr\!\big\{\bm S_t(\lambda_t\B+\sigma^2 \I_p)^{-1}\big\}
\Big].
\label{eq:Bpart-log}
\end{align}
\end{lemma}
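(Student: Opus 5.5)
The plan is to work with the matrix form of each subject's data and apply the orthogonal time rotation $\bm U_T$ from Assumption~\ref{ass:ERS}. Write $\bm Y_i=[\bm Y_{i1},\ldots,\bm Y_{iJ}]\in\mathbb{R}^{p\times J}$, so that $\bm y_i=\vecop(\bm Y_i)$. Then
\[
\vecop(\widetilde{\bm Y}_i)=\vecop(\bm Y_i\bm U_T)=(\bm U_T^\top\otimes \I_p)\,\bm y_i
\]
is a linear map of a Gaussian vector, hence Gaussian with mean zero. Its covariance is
\[
(\bm U_T^\top\otimes \I_p)\big(\one_J\one_J^\top\otimes\A+\bm\Sigma\otimes\B+\sigma^2\I_{pJ}\big)(\bm U_T\otimes \I_p).
\]
By the mixed-product rule for Kronecker products, this equals
\[
(\bm U_T^\top\one_J\one_J^\top\bm U_T)\otimes\A+(\bm U_T^\top\bm\Sigma\bm U_T)\otimes\B+\sigma^2\I_{pJ}.
\]

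Next, we substitute the two diagonalizations from Assumption~\ref{ass:ERS}. The first is $\bm U_T^\top\one_J\one_J^\top\bm U_T=\diag(J,0,\ldots,0)$, which holds because $\bm u_1=\one_J/\sqrt J$ and $\bm U_\perp^\top\one_J=\bm 0$ by orthogonality. The second is $\bm U_T^\top\bm\Sigma\bm U_T=\diag(\lambda_1,\ldots,\lambda_J)$. After substitution the covariance becomes block diagonal with $p\times p$ blocks. The first block is $J\A+\lambda_1\B+\sigma^2\I_p$, and the $t$th block is $\lambda_t\B+\sigma^2\I_p$ for $t\ge2$.

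For jointly Gaussian vectors, a block-diagonal covariance is equivalent to mutual independence of the blocks. Hence the columns $\widetilde{\bm Y}_{it}$ are independent with the stated marginals. Independence across $i$ comes directly from the i.i.d.\ surrogate model.

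For the likelihood we have two options: we can observe that the rotation is orthogonal, or work directly with the Gaussian density of $\bm y_i$. We take the latter route, since $\det(\bm U_T\otimes\I_p)=\pm1$. This makes the log-likelihood of $\bm y_i$ equal to that of $\vecop(\widetilde{\bm Y}_i)$, with no Jacobian term. The log-determinant and quadratic form of a block-diagonal covariance split into sums over the blocks.

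For each block $t$ we then sum over $i$ and use $\widetilde{\bm Y}_{it}^\top \bm C_t^{-1}\widetilde{\bm Y}_{it}=\tr(\widetilde{\bm Y}_{it}\widetilde{\bm Y}_{it}^\top \bm C_t^{-1})$. This gives
\[
\tfrac n2\big[\log\det \bm C_t+\tr(\bm S_t\bm C_t^{-1})\big].
\]
The $t=1$ term is $\mathcal L_A$, and the terms for $t\ge2$ sum to $\mathcal L_B$, which does not involve $\A$. Collecting the $\tfrac{npJ}{2}\log(2\pi)$ terms into the additive constant completes the decomposition.

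No step is genuinely hard. The point needing the most care is the Kronecker ordering convention: $\vecop$ stacks columns, so the time rotation acts as $\bm U_T^\top\otimes\I_p$ rather than $\I_p\otimes\bm U_T^\top$. We also need positive definiteness of the blocks, which holds because $\sigma^2>0$ and $\lambda_t\ge0$; this guarantees that the inverses and log-determinants are well defined even when $\bm\Sigma$ is only semidefinite.
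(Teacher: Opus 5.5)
Your proposal is correct and follows the paper's proof essentially step for step: the identity $\vecop(\widetilde{\bm Y}_i)=(\bm U_T^\top\otimes\I_p)\bm y_i$, the mixed-product expansion of the covariance, substitution of the two diagonalizations from Assumption~\ref{ass:ERS}, and mutual independence from the block-diagonal Gaussian covariance. Two of your remarks fill in details the paper leaves implicit when it says the log-likelihood ``factors cleanly'': that the orthogonal rotation leaves the likelihood unchanged, and that $\sigma^2>0$ keeps every block positive definite. One sentence is garbled, though: you say you take ``the latter route'' (the density of $\bm y_i$), but the justification you then give is exactly the orthogonal-rotation argument.
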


Lemma~\ref{lem:ERS-deco} shows that the rotated contrast directions depend only on $(\B,\sigma^2)$, whereas the rotated mean direction contains the additional between-subject covariance component $\A$. Thus, once $(\B,\sigma^2)$ are estimated from the contrast directions, $\A$ can be recovered from a profiled low-rank covariance problem.

\begin{theorem}[Profile update for $\A$]
\label{thm:A-closed-form}
For fixed $(\B,\sigma^2)$ with $\sigma^2>0$, let
\[
\bm H=\lambda_1\B+\sigma^2 \I_p.
\]
Over the parameter space
\[
\A\succeq 0,
\qquad
\rank(\A)\le d_1,
\]
a minimizer of $\mathcal L_A$ is
\[
\A^\star
=
\frac{1}{J}\bm H^{1/2}
\Pi_+^{(d_1)}
\!\left(
\bm H^{-1/2}\bm S_1\bm H^{-1/2}-\I_p
\right)
\bm H^{1/2},
\]
where $\Pi_+^{(d_1)}(\cdot)$ denotes the spectral projection that retains the largest $d_1$ positive eigenvalues and sets the remaining eigenvalues to zero.
\end{theorem}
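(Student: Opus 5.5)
Since $\bm H=\lambda_1\B+\sigma^2\I_p$ is positive definite when $\sigma^2>0$, we can whiten it away and reduce to a classical PPCA-type problem with isotropic unit noise. I would set $\bm C=J\,\bm H^{-1/2}\A\bm H^{-1/2}$, so that
\[
J\A+\lambda_1\B+\sigma^2\I_p=\bm H^{1/2}(\bm C+\I_p)\bm H^{1/2}.
\]
The map $\A\mapsto\bm C$ is a bijection of $\{\A\succeq0,\ \rank(\A)\le d_1\}$ onto $\{\bm C\succeq 0,\ \rank(\bm C)\le d_1\}$, because congruence by an invertible matrix preserves both semidefiniteness and rank. Writing $\widetilde{\bm S}=\bm H^{-1/2}\bm S_1\bm H^{-1/2}$ and using $\log\det$ multiplicativity and cyclicity of the trace, $\mathcal L_A$ in \eqref{eq:Apart-log} becomes, up to the constant $\tfrac n2\log\det\bm H$,
\[
f(\bm C)=\log\det(\I_p+\bm C)+\tr\{\widetilde{\bm S}(\I_p+\bm C)^{-1}\}.
\]
So it suffices to show that $f$ is minimized over rank-$\le d_1$ PSD matrices by $\bm C^\star=\Pi_+^{(d_1)}(\widetilde{\bm S}-\I_p)$. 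Transforming back then gives exactly the stated $\A^\star$.

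For the reduced problem I would write $\bm M=\I_p+\bm C$, whose eigenvalues $m_1,\dots,m_p$ satisfy $m_k\ge1$ with at most $d_1$ exceeding 1. I would then use the von Neumann trace inequality for the trace term: $\tr(\widetilde{\bm S}\bm M^{-1})\ge\sum_k s_k/m_{(k)}$, where $s_1\ge\dots\ge s_p$ are the eigenvalues of $\widetilde{\bm S}$. Here the largest $s_k$ are paired with the smallest $1/m$, i.e. the largest $m$, and equality holds when $\bm M$ shares the eigenvectors of $\widetilde{\bm S}$ in matching order. Since $\log\det\bm M$ depends only on the eigenvalues, the objective is bounded below by the separable problem
\[
\sum_k\{\log m_k+s_k/m_k\},\qquad m_k\ge1,\ \ m_k=1\ \text{for } k>d_1,
\]
and this lower bound is attained by a $\bm C$ that commutes with $\widetilde{\bm S}$.

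The scalar function $g(m)=\log m+s/m$ is minimized on $[1,\infty)$ at $m=\max(s,1)$, because it decreases on $(0,s)$ and increases on $(s,\infty)$. Placing the free eigenvalues on the top $d_1$ indices is optimal: the gain $g(1)-\min g=s-1-\log s$ for $s>1$ (and $0$ for $s\le1$) is increasing in $s$. Hence the optimum sets $m_k=\max(s_k,1)$ for $k\le d_1$ and $m_k=1$ otherwise, with eigenvectors those of $\widetilde{\bm S}$. This is exactly $\bm C^\star=\Pi_+^{(d_1)}(\widetilde{\bm S}-\I_p)$: the top $d_1$ eigenvalues of $\widetilde{\bm S}-\I_p$ are kept if positive, and everything else is set to zero.

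The main obstacle is the justification that the ordering and the eigenvector alignment are optimal, i.e. applying the trace inequality correctly with the rank constraint. Two points need care. First, ties among the $s_k$ make the minimizer non-unique, which is why the statement says ``a minimizer''. Second, the case where fewer than $d_1$ eigenvalues of $\widetilde{\bm S}$ exceed 1 must be covered, and there the projection simply retains fewer components, consistent with the formula. Existence of the minimum is guaranteed constructively by the attained lower bound, so no compactness argument is needed.
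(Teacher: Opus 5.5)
Your proof is correct, and the reduction step is the same as the paper's: whitening by $\bm H^{-1/2}$, noting that congruence preserves $\succeq 0$ and rank, and passing to $\log\det(\I_p+\bm C)+\tr\{\widetilde{\bm S}(\I_p+\bm C)^{-1}\}$. You diverge from the paper in how you solve the reduced problem.

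The paper factors $\bm C=\bm W\bm W^\top$ with $\bm W\in\mathbb{R}^{p\times d_1}$ and recognizes the PPCA negative log-likelihood with noise variance fixed at $1$. It writes the stationarity condition $\widetilde{\bm S}(\bm W\bm W^\top+\I_p)^{-1}\bm W=\bm W$ and imports from \citet{tipping1999probabilistic} that the minimum is $\bm W^\star=\bm V_{d_1}(\bm\Lambda_{d_1}-\I_{d_1})_+^{1/2}\bm R$.

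You instead give a direct certificate of global optimality. The trace inequality $\tr(\widetilde{\bm S}\bm M^{-1})\ge\sum_k s_k/m_{(k)}$, combined with the spectral form of $\log\det$, bounds the objective below by decoupled scalar problems $\min_{m\ge 1}\{\log m+s_k/m\}$, with $m_{(k)}=1$ forced for $k>d_1$. The co-diagonal choice then attains this bound.

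The paper's route is shorter and makes the PPCA interpretation of the initializer explicit. However, it hands the hard part to a citation: why the principal, positively truncated eigenvectors beat every other stationary point. That analysis is not restated for the fixed-unit-noise setting. Your argument is self-contained and needs no stationary-point or stability analysis. It also covers the truncated case, where fewer than $d_1$ eigenvalues of $\widetilde{\bm S}$ exceed $1$, automatically.

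You also get the uniqueness question right. When $s_{d_1}=s_{d_1+1}>1$ the minimizer is not unique, so the paper's remark about a ``unique global solution'' is overstated; the theorem, and your proof, claim only ``a minimizer''.

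Finally, your separate argument that the free eigenvalues belong on the top $d_1$ indices (the gain $s-1-\log s$ increasing in $s$) is redundant. The decreasing pairing in the trace inequality already places them there. It is harmless, though.
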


For general known shared temporal covariance satisfying Assumption~\ref{ass:ERS}, we estimate $(\B,\sigma^2)$ from the contrast likelihood \eqref{eq:Bpart-log} by block-coordinate descent in the spectral parametrization $\B=\bm{Q}\diag(\bm{b})\bm{Q}^\top$, alternating a projected IRLS update for $(\bm{b},\sigma^2)$ and a Riemannian gradient step for $\bm{Q}$ on the Stiefel manifold. This yields \emph{Initializer~1}; details are given in Web Algorithm~1.

A particularly simple case arises under compound symmetry.

\begin{corollary}[Compound-symmetry special case]
\label{cor:cs}
Suppose that
\[
\bm{\Sigma}(\tau^2)=(1-\tau^2)I_J+\tau^2\one_J\one_J^\top,
\qquad 0\le \tau^2<1.
\]
Then
\[
\lambda_1(\tau^2)=1+(J-1)\tau^2,
\qquad
\lambda_t(\tau^2)\equiv \lambda_c(\tau^2)=1-\tau^2,
\quad t=2,\ldots,J.
\]
Let
\[
\overline{\bm S}_c=\Q\diag(\bar s_1,\ldots,\bar s_p)\Q^\top,
\qquad
\bar s_1\ge\cdots\ge\bar s_p,
\]
and assume $d_2<p$. For fixed $\tau^2$, the contrast likelihood \eqref{eq:Bpart-log} is minimized by
\[
\widehat{\sigma}^2=\frac{1}{p-d_2}\sum_{k>d_2}\bar s_k,
\qquad
\widehat b_k=
\begin{cases}
\max\!\left\{
\frac{\bar s_k-\widehat{\sigma}^2}{\lambda_c(\tau^2)},\,0
\right\}, & k=1,\ldots,d_2,\\[2mm]
0, & k=d_2+1,\ldots,p.
\end{cases}
\]
with $\widehat{\B}=\bm{Q}\diag(\widehat b_1,\ldots,\widehat b_p)\bm{Q}^\top$.
\end{corollary}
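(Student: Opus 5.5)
First I check that compound symmetry satisfies Assumption~\ref{ass:ERS}. Since $\bm{\Sigma}(\tau^2)\one_J=(1-\tau^2+J\tau^2)\one_J$, the vector $\one_J$ is an eigenvector with $\lambda_1=1+(J-1)\tau^2$. On $\one_J^\perp$ the rank-one term $\tau^2\one_J\one_J^\top$ vanishes, so every orthonormal basis $\bm U_\perp$ of $\one_J^\perp$ diagonalizes $\bm\Sigma$ with eigenvalue $\lambda_c=1-\tau^2>0$. Lemma~\ref{lem:ERS-deco} therefore applies. Because all the contrast eigenvalues are equal, the sum in \eqref{eq:Bpart-log} collapses:
\[
\mathcal L_B=\frac{n(J-1)}{2}\Big[\log\det(\lambda_c\B+\sigma^2\I_p)+\tr\{\overline{\bm S}_c(\lambda_c\B+\sigma^2\I_p)^{-1}\}\Big].
\]
This is exactly the PPCA negative log-likelihood with sample covariance $\overline{\bm S}_c$ and model covariance $\bm C=\lambda_c\B+\sigma^2\I_p$. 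The constraint set is $\{\bm C=\bm M+\sigma^2\I_p:\ \bm M\succeq0,\ \rank(\bm M)\le d_2,\ \sigma^2>0\}$, and the map $\B\mapsto \lambda_c\B$ is a bijection on rank-$\le d_2$ PSD matrices.

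Second, I minimize $f(\bm C)=\log\det\bm C+\tr(\overline{\bm S}_c\bm C^{-1})$ over this set, following the Tipping--Bishop argument.
\begin{itemize}
\item \emph{Fixed eigenvalues.} Write $\bm C=\bm V\diag(c_k)\bm V^\top$, where $c_k=\sigma^2+m_k$. For a fixed eigenvalue vector, von Neumann's trace inequality shows that $\tr(\overline{\bm S}_c\bm C^{-1})$ is minimized by aligning $\bm V$ with $\Q$. The largest $\bar s_k$ must pair with the smallest eigenvalues of $\bm C^{-1}$, that is, with the largest $c_k$. Hence we may take $\bm V=\Q$ with $c_1\ge\cdots\ge c_p$, where $c_k=\sigma^2$ for $k>d_2$.
\item \emph{Scalar reduction.} The objective becomes $\sum_{k\le d_2}(\log c_k+\bar s_k/c_k)+\sum_{k>d_2}(\log\sigma^2+\bar s_k/\sigma^2)$, with $c_k\ge\sigma^2$.
\item \emph{Fixed $\sigma^2$.} Each term $\log c+\bar s/c$ is unimodal with minimizer $c=\bar s$. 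The constrained optimum is therefore $c_k=\max(\bar s_k,\sigma^2)$.
\end{itemize}

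Third, I profile over $\sigma^2$. The main obstacle is confirming that the global optimum is attained at $\widehat\sigma^2=(p-d_2)^{-1}\sum_{k>d_2}\bar s_k$ with $c_k=\bar s_k$ for $k\le d_2$. This could fail if $\sigma^2$ exceeds some of the top $\bar s_k$, in which case the clipping becomes active and changes the stationarity equation.

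I would handle this by noting that $\widehat\sigma^2$, being an average of $\bar s_{d_2+1},\dots,\bar s_p$, satisfies $\widehat\sigma^2\le\bar s_{d_2+1}\le\bar s_k$ for $k\le d_2$. So at this candidate the clipping is inactive and the first-order condition $\sum_{k>d_2}(1/\sigma^2-\bar s_k/\sigma^4)=0$ holds. To show it is the global minimizer, I would consider the profiled function $g(\sigma^2)=\sum_k \min_{c\ge\sigma^2}(\log c+\bar s_k/c)$. Its $k$th term equals $\log\max(\bar s_k,\sigma^2)+\bar s_k/\max(\bar s_k,\sigma^2)$. On each interval between consecutive ordered $\bar s$ values the derivative has the sign of $m\sigma^2-\sum_{\text{clipped}}\bar s_k$, where $m\ge p-d_2$ counts the clipped terms. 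This sign analysis gives a unique minimizer, which coincides with $\widehat\sigma^2$. The degenerate case $\bar s_k=\widehat\sigma^2$ is harmless and accounts for the $\max\{\cdot,0\}$ in the statement.

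Finally, mapping back gives $\lambda_c\widehat b_k=c_k-\widehat\sigma^2=\max(\bar s_k-\widehat\sigma^2,0)$ for $k\le d_2$ and $\widehat b_k=0$ otherwise. This yields $\widehat\B=\Q\diag(\widehat b)\Q^\top$ as claimed. The requirement $d_2<p$ guarantees that $\widehat\sigma^2$ is well defined. If $\widehat\sigma^2=0$ while $\sigma^2>0$ is required, the minimum is approached only as an infimum, and I would state that caveat.
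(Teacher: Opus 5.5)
Your proof is correct. The first half is exactly the paper's argument. You check that compound symmetry satisfies Assumption~1 with $\lambda_1=1+(J-1)\tau^2$ and $\lambda_c=1-\tau^2>0$, and collapse the contrast sum to $\frac{n(J-1)}{2}[\log\det(\lambda_c\B+\sigma^2\I_p)+\tr\{\overline{\bm S}_c(\lambda_c\B+\sigma^2\I_p)^{-1}\}]$. You then recognize this as a PPCA likelihood with target $\overline{\bm S}_c$.

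The difference is in the second half. The paper invokes Tipping and Bishop's global MLE, with $\widehat\sigma^2$ the mean of the discarded eigenvalues and $\widehat{\bm W}_c=\Q_{d_2}(\bm\Lambda_{d_2}-\widehat\sigma^2\I_{d_2})_+^{1/2}\bm R$ for $\bm W_c=\sqrt{\lambda_c}\bm W_2$. It then divides $\widehat{\bm W}_c\widehat{\bm W}_c^\top$ by $\lambda_c$. You instead re-derive that optimum directly in covariance form, in three steps:
\begin{itemize}
\item a trace inequality to align the eigenvectors with $\Q$;
\item coordinatewise minimization of $\log c+\bar s_k/c$ subject to $c\ge\sigma^2$;
\item a profile in $\sigma^2$.
\end{itemize}
The citation is shorter. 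Your route is self-contained and certifies joint global optimality in $(\B,\sigma^2)$. It also exposes two points the paper leaves implicit. First, the $\max\{\cdot,0\}$ is never actually active, since $\widehat\sigma^2\le\bar s_{d_2+1}\le\bar s_k$ for $k\le d_2$. Second, if $\bar s_{d_2+1}=\cdots=\bar s_p=0$ there is no minimizer at all, because the objective tends to $-\infty$ as $\sigma^2\to0$. Your closing caveat correctly flags this case.

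One slip to fix. As written, $g(\sigma^2)=\sum_k\min_{c\ge\sigma^2}(\log c+\bar s_k/c)$ also lets the terms with $k>d_2$ escape the clip, which would discard the rank constraint. The minimum over $c\ge\sigma^2$ should be taken only for $k\le d_2$. The other $p-d_2$ terms stay fixed at $\log\sigma^2+\bar s_k/\sigma^2$, which is what your count $m\ge p-d_2$ already assumes.

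With this correction, $g'(\sigma^2)=\sigma^{-4}\sum_{k\in\mathcal C(\sigma^2)}(\sigma^2-\bar s_k)$, where $\mathcal C(\sigma^2)$ is the clipped set. This sum is continuous and strictly increasing in $\sigma^2$: new terms enter at value zero, and each has slope one. It vanishes at $\widehat\sigma^2$, which gives cleanly the uniqueness you assert.
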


Combined with Theorem~\ref{thm:A-closed-form}, Corollary~\ref{cor:cs} yields \emph{Initializer~2}: an equal-contrast PPCA step followed by a one-dimensional bounded profile search over $\tau^2$ and the profiled update of $\A$, as detailed in Web Algorithm~2. Full derivations, proofs, and algorithms are also provided in Web Appendix B. Both initialization methods are implemented and available as user-selectable options in our accompanying Python code.

\section{Simulation Studies}\label{sec:simulation}

We conducted extensive simulation studies to evaluate the performance of the proposed HPPCA framework. The experiments were designed to systematically assess (1) the accuracy of parameter recovery under heavy missingness; (2) imputation performance compared to existing dimension reduction methods under correct model specification; (3) the robustness of the model when the true latent dynamics are misspecified; and (4) the computational efficiency gained by the balanced-panel initializers proposed in Section \ref{sec:init-cases}.

Across all scenarios, we fixed the sample size to $n = 1000$ and the number of features to $p = 50$. We also assumed that the temporal kernels share the same kernel parameter $\ell$ through all the simulations. We applied item-level missing completely-at-random (MCAR) with missing rates $p_{\text{miss}}$ ranging from $0.1$ to $0.9$. We benchmarked HPPCA against standard PPCA, which ignores longitudinal dependency by treating repeated measures from the same subject as independent observations, and two variants of mFPCA developed for time-series data. The two mFPCA approaches are distinguished by their eigen-decomposition strategies:  mFPCA(cov) relies on the classical diagonalization of the covariance operator \citep{Happ2018multivariate}, and mFPCA(inner) decomposes the inner-product matrix to compute the eigencomponents \citep{golovkine2025use}. PPCA and mFPCA were implemented via the python libraries \textit{pyppca} \citep{green2019pyppca} and \textit{FDApy} \citep{golovkine_2024_fdapy_paper}, respectively.

To ensure a rigorous and conservative evaluation, we matched the visit-level dimensions across methods. Specifically, when HPPCA is configured with latent dimensions $d_1 = d_2 = d$, we allocate $d_1 + d_2 = 2d$ principal components to all PPCA and mFPCA variants. It is worth noting that this configuration inherently favors the existing methods. Because HPPCA structurally constrains the $d_1$-dimensional static traits to be shared across a subject's longitudinal trajectory, its total latent dimension for the $i$th subject is restricted to $d_1 + J_i d_2$. In contrast, the baselines are granted a larger, unconstrained latent capacity of $J_i (d_1 + d_2)$-dimensions per subject to model the observations.

Because the baseline mFPCA methods require sufficient temporal density for functional smoothing, they encounter numerical instability at extreme missing rates. Furthermore, extreme missingness rates may lack practical plausibility in certain real-world scenarios. Consequently, our predictive imputation comparisons against baselines are restricted to $p_{\text{miss}} \in \{0.1, 0.3, 0.5\}$. All results are averaged over 100 independent replicates.

\subsection{Parameter Recovery under Heavy Missingness}\label{subsec:study1}
In this simulation, we adopt three kernels for $\bm{\Sigma}$: (i) i.i.d. ($\bm{\Sigma}=\I_J$), (ii) RBF$(K_{RBF}(\ell=10))$, and (iii) Mat\'ern–$5/2$ $(K_{5/2}(\ell=10))$. We fixed the noise component $\sigma^2$ at 0.25 and evaluated panels with varying temporal lengths and latent dimensions: $(J, d_1, d_2) \in \{(5, 2, 2), (10, 4, 4)\}$. For simplicity, we assumed a standardized follow-up schedule where all subjects share the same set of observation times, spaced by a fixed interval of 10. The entries of the true loading matrices $\bm{W}_1$ and  $\bm{W}_2$ were generated independently from a standard normal distribution. To assess factor loadings recovery, we compared the estimated and true column spaces via principal angles. Let $\mathcal M(\bm{W}_a)$ denote the span of the columns of $\bm{W}_a$, and let $\theta_a \in [0^\circ, 90^\circ]$ be the largest principal angle between $\mathcal M(\widehat{\bm{W}}_a)$ and $\mathcal M(\bm{W}_a)$, $a \in\{1,2\}$, whose cosine can be obtained from the smallest singular values of the cross–basis matrix \citep{bjorck1973numerical, absil2006largest}. A value closer to $0^\circ$ means the two subspaces are closer.

Table \ref{tab:sim1-para} demonstrates that the proposed EM algorithm successfully recovers the hierarchical structure. Subspace errors increase gracefully with the missing rate but remain exceptionally small, typically below $4^\circ$ when missing rate is 90\%. In addition, $\sigma^2$ and $\ell$ are accurately recovered without bias. 

As the missing rate reaches $0.7$ and $0.9$, the data design degenerates into a highly irregular structure where many subjects missed a whole visit. While existing methods may not be stable under such sparsity, HPPCA natively accommodates this irregularity via its exact probabilistic missing-data mechanism, successfully leveraging the remaining fragments of temporal correlation.

\begin{table}[!htbp]
\begin{threeparttable}
  \centering
  \caption{Subspace Recovery and Parameter Estimation for HPPCA with Different Temporal Covariance Structures and Missing Rates over 100 Replicates.}
  \footnotesize
    \begin{tabular}{llccccc}
    \toprule
    $(J, d_1, d_2)$ & Kernel & $p_{\text{miss}}$ & $\bm{W}_1$    & $\bm{W}_2$    & $\sigma^2$ & $\ell$ \\
    \midrule
    (5, 2, 2) & IID   & 0     & 1.574 (0.609) & 0.465 (0.054) & 0.000 (0.001) & / \\
          &       & 0.1   & 1.581 (0.610) & 0.495 (0.059) & 0.000 (0.001) & / \\
          &       & 0.3   & 1.597 (0.606) & 0.568 (0.069) & 0.000 (0.001) & / \\
          &       & 0.5   & 1.647 (0.598) & 0.680 (0.085) & -0.001 (0.001) & / \\
          &       & 0.7   & 1.742 (0.586) & 0.910 (0.114) & -0.001 (0.001) & / \\
          &       & 0.9   & 3.334 (4.168) & 6.206 (18.205) & 0.002 (0.023) & / \\
          & $K_{RBF}(\ell = 10)$ & 0     & 2.025 (0.799) & 0.471 (0.063) & 0.000 (0.001) & 0.012 (0.077) \\
          &       & 0.1   & 2.034 (0.797) & 0.501 (0.068) & 0.000 (0.001) & 0.013 (0.078) \\
          &       & 0.3   & 2.060 (0.798) & 0.566 (0.076) & 0.000 (0.001) & 0.013 (0.078) \\
          &       & 0.5   & 2.093 (0.776) & 0.681 (0.086) & -0.001 (0.001) & 0.013 (0.079) \\
          &       & 0.7   & 2.159 (0.740) & 0.899 (0.120) & -0.001 (0.001) & 0.015 (0.084) \\
          &       & 0.9   & 2.805 (0.626) & 2.014 (0.267) & -0.003 (0.003) & 0.014 (0.117) \\
          & $K_{5/2}(\ell = 10)$  & 0     & 1.930 (0.735) & 0.475 (0.066) & 0.000 (0.001) & 0.026 (0.126) \\
          &       & 0.1   & 1.940 (0.732) & 0.505 (0.066) & 0.000 (0.001) & 0.026 (0.126) \\
          &       & 0.3   & 1.966 (0.735) & 0.569 (0.074) & 0.000 (0.001) & 0.025 (0.127) \\
          &       & 0.5   & 2.009 (0.713) & 0.683 (0.094) & -0.001 (0.001) & 0.024 (0.129) \\
          &       & 0.7   & 2.088 (0.695) & 0.899 (0.121) & -0.001 (0.001) & 0.026 (0.127) \\
          &       & 0.9   & 2.772 (0.605) & 2.017 (0.282) & -0.003 (0.003) & 0.036 (0.172) \\
    (10, 4, 4) & IID   & 0     & 1.848 (0.450) & 0.375 (0.048) & 0.000 (0.001) & / \\
          &       & 0.1   & 1.851 (0.448) & 0.398 (0.051) & 0.000 (0.001) & / \\
          &       & 0.3   & 1.858 (0.437) & 0.458 (0.057) & 0.000 (0.001) & / \\
          &       & 0.5   & 1.880 (0.431) & 0.560 (0.067) & 0.000 (0.001) & / \\
          &       & 0.7   & 1.952 (0.421) & 0.792 (0.095) & -0.001 (0.001) & / \\
          &       & 0.9   & 2.832 (0.428) & 2.275 (0.273) & -0.005 (0.003) & / \\
          &  $K_{RBF}(\ell = 10)$ & 0     & 2.712 (0.568) & 0.378 (0.043) & 0.000 (0.001) & 0.003 (0.028) \\
          &       & 0.1   & 2.719 (0.568) & 0.400 (0.046) & 0.000 (0.001) & 0.003 (0.028) \\
          &       & 0.3   & 2.724 (0.568) & 0.460 (0.050) & 0.000 (0.001) & 0.002 (0.029) \\
          &       & 0.5   & 2.737 (0.574) & 0.562 (0.060) & 0.000 (0.001) & 0.003 (0.030) \\
          &       & 0.7   & 2.784 (0.572) & 0.782 (0.092) & -0.001 (0.001) & 0.001 (0.033) \\
          &       & 0.9   & 3.352 (0.577) & 2.081 (0.246) & -0.005 (0.003) & 0.001 (0.060) \\
          &  $K_{5/2}(\ell = 10)$ & 0     & 2.667 (0.575) & 0.377 (0.043) & 0.000 (0.001) & 0.003 (0.055) \\
          &       & 0.1   & 2.673 (0.574) & 0.399 (0.047) & 0.000 (0.001) & 0.003 (0.055) \\
          &       & 0.3   & 2.680 (0.573) & 0.459 (0.051) & 0.000 (0.001) & 0.003 (0.056) \\
          &       & 0.5   & 2.693 (0.577) & 0.563 (0.062) & 0.000 (0.001) & 0.003 (0.057) \\
          &       & 0.7   & 2.743 (0.570) & 0.786 (0.094) & -0.001 (0.001) & 0.002 (0.058) \\
          &       & 0.9   & 3.357 (0.573) & 2.176 (0.262) & -0.005 (0.003) & 0.002 (0.086) \\
    \bottomrule
    \end{tabular}%
    Note:  Entries in the \(\mathbf W_1\) and \(\mathbf W_2\) columns are the largest principal
angles (in degrees) between the estimated and ground–truth loading subspaces,
reported as mean (sd). The \(\sigma^2\) and \(\ell\) columns give the
bias (sd) of the estimated noise variance and kernel length–scale,
respectively.
  \label{tab:sim1-para}%
\end{threeparttable}%
\end{table}%

\subsection{Imputation Accuracy under Correct Specification}

A key advantage of our generative framework is its ability to optimally impute unobserved clinical features via the conditional posterior mean $\langle \bm{Y}_{ij}^{(m)} \rangle$. Using the correctly specified datasets, we evaluated imputation accuracy via the mean squared error (MSE) computed on the randomly masked, held-out entries: $\text{MSE}_{\text{miss}} = \sum_i (\hat{y}_i - y_i)^2 / \#\text{missing}$.

When simulating correctly specified datasets, we spanned none to strong temporal correlation by employing an uncorrelated i.i.d. case $\bm{\Sigma}=\mathbf I_J$ and an RBF kernel with length-scale $\ell = 10$. Similar to the first simulation study, we set $\sigma^2=0.25$, and considered two settings $(J, d_1,d_2)\in\{(5, 2,2),(10, 4,4)\}$ and a fixed time interval of $10$ between visits. The entries of the true loading matrices $\bm{W}_1$ and $\bm{W}_2$ were also generated independently from a standard normal distribution.  

Figure \ref{fig:sim_correct} displays the distribution of the imputation MSE. HPPCA consistently outperforms PPCA and both mFPCA variants across all missing rates, visit lengths, and temporal kernels. PPCA performs worse than HPPCA because it fails to borrow information across time. Therefore, its performance deficit relative to HPPCA widens as the number of scheduled visits and the missingness rate increase. Conversely, while mFPCA variants attempt to smooth over time, they struggle with sparse observed timepoints. Therefore, their imputation MSE is much higher than that of HPPCA and PPCA. 

\begin{figure}[!htbp]
    \centering
    \includegraphics[width=0.9\linewidth]{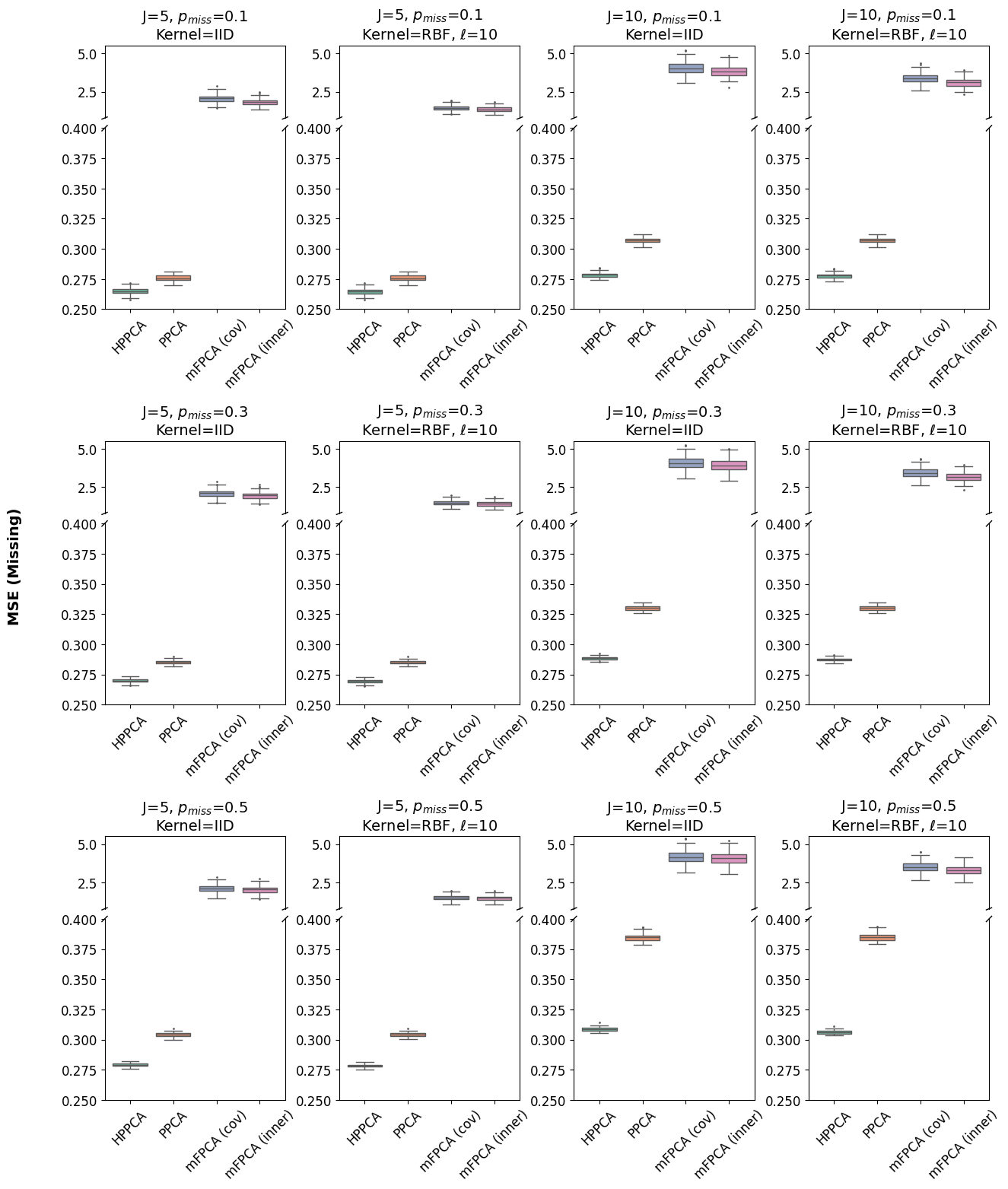}
    \caption{\textbf{Imputation Accuracy under Correct Model Specification.} Distribution of the imputation MSE on held-out entries across 100 simulated replicates. In each panel, from left to right, the green, orange, blue, pink boxplots pertain to HPPCA, PPCA, mFPCA(cov) and mFPCA(inner). Rows correspond to different entry-wise missingness rates ($p_{\text{miss}} \in \{0.1, 0.3, 0.5\}$). Columns denote different combinations of scheduled visits ($J \in \{5, 10\}$) and true within-subject temporal correlation structures (uncorrelated IID vs. smooth RBF kernel with $\ell=10$). A broken y-axis is employed to accommodate the substantially larger errors produced by the mFPCA variants while preserving the visual resolution needed to differentiate the superior performance of HPPCA against PPCA.}
    \label{fig:sim_correct}
\end{figure}

\subsection{Robustness to Model Misspecification}
While the HPPCA model assumes that within-subject latent trajectories evolve as smooth, continuous-time Gaussian processes, real-world subject trajectories often exhibit rough, discrete state transitions. To rigorously stress-test our model's robustness against this assumption, we generated the true latent dynamics using a discrete-time linear dynamical system (LDS) \citep{bishop2006pattern}.  We evaluated two specific experimental settings: $(J, d) \in \{(5, 4), (10, 8)\}$, where $J$ denotes the number of visits per subject and $d$ represents the total number of latent dimensions in the LDS.

Specifically, the latent factors $\bm{Z}_{ij}$ for the $i$th subject at the $j$th time step were generated via a first-order autoregressive (AR(1)) process: $ \bm{Z}_{ij} = \bm{\Phi} \bm{Z}_{i(j-1)} + \bm{w}_{ij}$. To mimic both slowly varying subject-level traits and variation with lower temporal persistence, we partitioned the $d$-dimensional latent space into two subspaces of dimensions $d_1$ and $d_2$ (such that $d = d_1 + d_2$). The state transition matrix was defined as $\bm{\Phi} = \operatorname{blockdiag} \{0.95 \bm{I}_{d_1}, \rho \bm{I}_{d_2}\}$. The process noise $\bm{w}_{ij}$ was drawn from $\mathcal{N}(\bm{0}, \bm{\Omega}_w)$, with $\bm{\Omega}_w = \operatorname{blockdiag}((1-0.95^2)\bm{I}_{d_1}, (1-\rho^2)\bm{I}_{d_2})$. The parameter $\rho \in \{0.3, 0.6, 0.95\}$ governs the temporal autocorrelation of the dynamic factors, while the fixed $0.95$ coefficient ensures the first $d_1$ dimensions maintain a consistently high correlation. To ensure that the autoregressive process is covariance-stationary, the initial latent state for each subject was drawn from a standard multivariate normal distribution, specifically $\bm{Z}_{i0} \sim \mathcal{N}(\bm{0}, \bm{I}_d)$. Finally, the latent representation were related to the observed features in a PPCA structure, i.e. $\bm{Y}_{ij} = \bm{W}\bm{Z}_{ij} + \sigma \bm{\varepsilon}_{ij}, \bm{\varepsilon}_{ij} \sim N(\bm{0}, \bm{I}_p)$, where the entries of the loading matrix $\bm{W} \in \mathbb{R}^{p \times d}$ were generated from a standard normal distribution, and $\sigma^2$ was fixed at 0.25. In this simulation, all observation times were uniformly scaled to the interval $[0, 1]$ and evenly distributed. 

Figure \ref{fig:sim_misspecified_ar1_J5} presents the imputation MSE under this misspecified scenario for $J = 5$ and $d = 4$. Analogous results for $J = 10$ and $d = 8$, along with similarly robust findings for AR(2) generating processes, are detailed in Web Appendix~C.1 and C.2 respectively. Despite the generative mismatch, HPPCA with RBF kernel successfully adapts and retains its predictive dominance over all baselines. The GP kernels gracefully approximate the discrete AR dynamics by optimizing the data-driven length-scale $\ell$. When the missing rate become higher, the gaps between the existing methods and HPPCA become bigger. PPCA's performance degrades when the time autocorrelation becomes stronger, while the mFPCA(cov)'s performance degrades more when the time autocorrelation becomes weaker. 
These results confirm HPPCA's flexibility as a general longitudinal imputation tool.

\begin{figure}[!htbp]
    \centering
\includegraphics[width=0.8\linewidth]{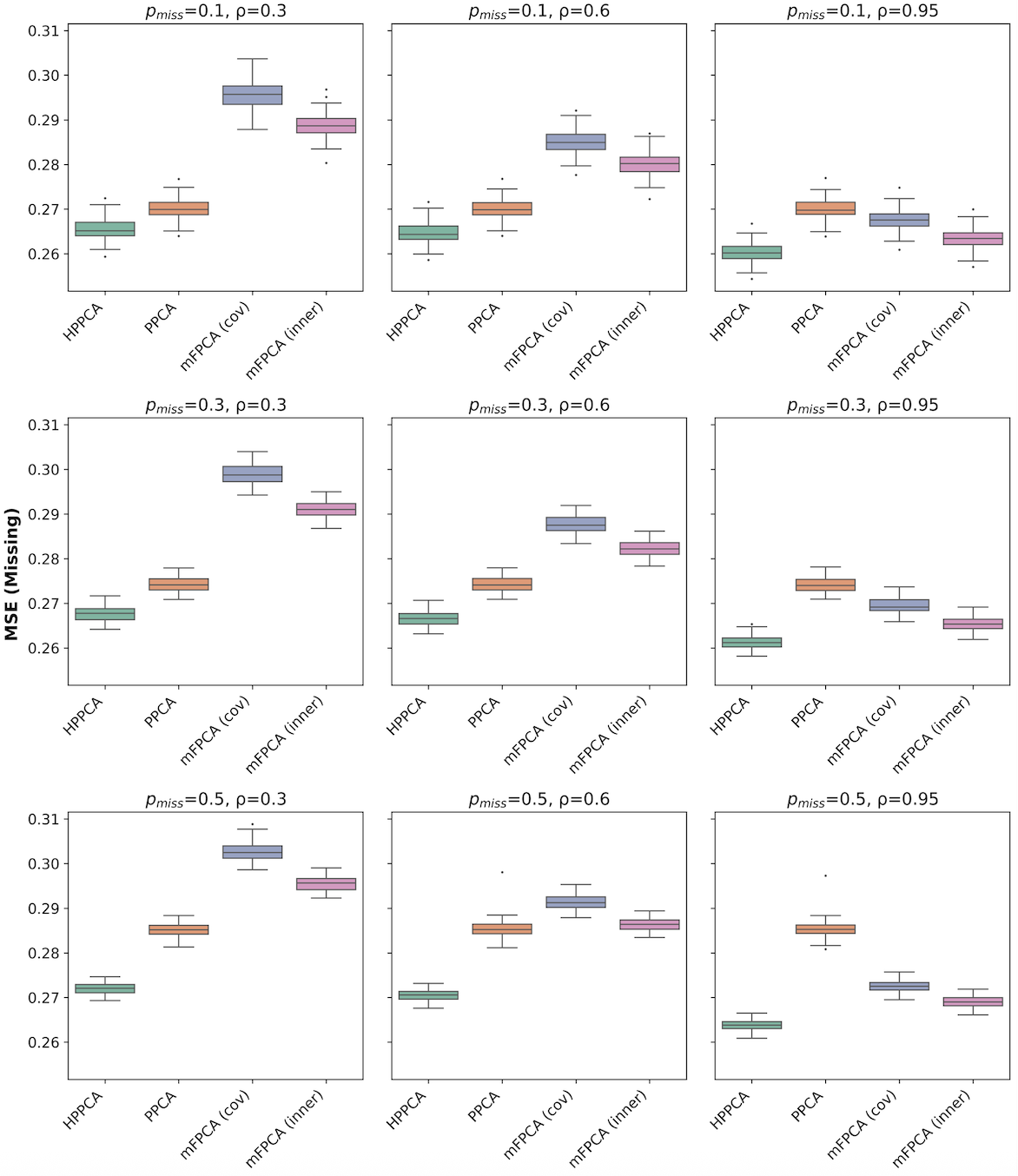}
    \caption{\textbf{Predictive performance on missing observations under the discrete LDS generative mechanism ($J=5, d=4$).} Box plots display the MSE of missing data reconstruction for HPPCA and three existing methods (PPCA, mFPCA-cov, mFPCA-inner). The experimental grid varies the missingness probability $p_{\text{miss}} \in \{0.1, 0.3, 0.5\}$ (rows) and the temporal autocorrelation parameter $\rho \in \{0.3, 0.6, 0.95\}$ (columns).}
    \label{fig:sim_misspecified_ar1_J5}
\end{figure}

\subsection{Computational Efficiency of the Initializers}
 To demonstrate the practical utility and computational efficiency of the initializers derived in Section \ref{sec:init-cases}, we replicated the simulation settings from Section~\ref{subsec:study1} using an RBF kernel with $\ell = 10$ and missing rates $p_{\text{miss}} \in \{0, 0.1, 0.3, 0.5\}$. Over 100 replicates for each scenario, we compared our two initializers with a random initializer where the entries of $\bm{W}_1$ and $\bm{W}_2$ are sampled independently from a standard normal distribution, and $\sigma^2 = 1.0$, and the length scale set to $\ell = \max(5, T)$, where $T$ represents the total time range. Web Appendix~C.3 details the implementation of Initializers 1 and 2 in real scenarios characterized by unknown temporal covariance matrix, irregular visit schedules, and heavy missingness.
 
 Web Table~1 and~2 present the comprehensive comparison of the three initialization methods for $(J, d_1, d_2) = (5, 2, 2)$ and $(10, 4, 4)$, respectively. Across all settings, the EM algorithm demonstrates robust final convergence, achieving comparable, high-quality parameter estimates regardless of the initialization method.
 
 A key advantage of the proposed initializers is their exceptional proximity to the true latent subspaces. When initialized randomly from standard normal distributions, the starting loading matrices are virtually orthogonal to the true subspaces (initial principal angles $\approx 85^\circ \sim 87^\circ$). By contrast, despite the structural misspecification of the surrogate model and the presence of mean-imputed missing data, both Initializer 1 and Initializer 2 provide remarkable warm starts when the panel is short, even under 50\% of missingness. Their initial principal angles are remarkably closer to the truth (e.g., $2^\circ \sim 7^\circ$). When the panel becomes longer, initializer 2 still has robust performance across different missing rates. Initializer 1 occasionally shows a larger initial discrepancy than Initializer 2, but it still vastly outperforms random starts.

These high-quality starting points translate into massive computational savings during the subsequent EM algorithm updates. For example, in the complex scenario of $(J, d_1, d_2) = (10, 4, 4)$ with a $30\%$ missingness rate, the random initializer required an average of $11,489$ iterations (185.3 minutes) to reach convergence. Initializer 1 reduced the EM iteration count by over 55\% (down to 5,130 iterations), saving more than 100 minutes of wall-clock time per run.  Notably, even when Initializer 1 starts with a larger subspace discrepancy than Initializer 2, its total wall-clock time shrinks to be nearly identical. This rapid alignment suggests that the EM algorithm effectively corrects the Initializer 1 subspace during the first few gradient steps. These results highlight the practical effectiveness of the proposed initializers for robustly scaling the HPPCA framework to complex datasets.

\section{Application to the RECOVER Adult Cohort}\label{sec:recover}

The NIH RECOVER initiative includes a nationwide longitudinal adult cohort study designed to understand the post-acute sequelae of SARS-CoV-2 infection, commonly known as long COVID \citep{horwitz2023researching}. The physical symptom burden of long COVID is highly heterogeneous, exhibiting complex time dynamics that evolve considerably from the initial acute infection through the post-acute phase. Although the scheduled follow-up is every three months, clinical visits are inherently irregular and subject to heavy item-level missingness. Disentangling these static and dynamic components thus requires a flexible hierarchical framework.

\subsection{Study Cohort and Data Preprocessing}
We analyzed the longitudinal physical symptom severity trajectories of subjects from the RECOVER adult cohort, utilizing data updated through June 2025. To focus strictly on post-infection trajectories, the analysis cohort was restricted to $n = 13{,}192$ subjects with at least one documented SARS-CoV-2 infection. This comprises $7{,}873$ subjects enrolled in the post-acute phase, $4{,}212$ in the acute phase, and $1{,}107$ who were uninfected at enrollment but subsequently infected. For the $i$th subject, the observation time $t_{ij}$ at the $j$th visit was defined as the number of days since their first infection date to COVID-19. 

The clinical feature space consists of $p=57$ non-sex-specific physical symptom burden items. Each item was originally recorded on an ordinal 5-category severity scale (1: ``Not at all'' to 5: ``Very Much''). To efficiently handle logical skip patterns in the surveys, we applied an imputation: if a subject explicitly reported the absence of the symptom at a visit, the corresponding detailed burden questions were filled with ``Not at all'' (score 1). Prior to analysis, each of the $57$ symptom variables was standardized to have zero mean and unit variance.

Despite this logical filling, the dataset remains highly sparse, with an overall item-level missingness rate of $28.58\%$. As presented in Web Figure~3, the number of longitudinal visits varies substantially across subjects, ranging from 1 to 16 visits; the time ranges from the first day of COVID-19 infection to more than 1400 days, which makes it well suited for studying long COVID. The empirical distributions of visit counts and timings, along with detailed physical symptom preprocessing steps and definitions, are provided in Web Appendix D.1 and D.2, respectively.

\subsection{Extracting Static and Dynamic Symptom Phenotypes}
We fitted the proposed HPPCA model using an RBF temporal kernel with a common kernel parameter $\ell$, evaluating matched latent dimensions $d_1 = d_2 \in \{1,\ldots,10\}$. As model capacity increases, both the estimated measurement noise $\sigma^2$ and the reconstruction MSE monotonically decline (See Web Figure~4), which shows that the HPPCA model captures more variation when both level dimensions increase. 
To quantitatively verify that the extracted latent factors capture diverse clinical constructs within each hierarchical level, we computed the pairwise angles between the columns of $\widehat{\bm{W}}_1$, as well as between the columns of $\widehat{\bm{W}}_2$, for $d_1 = d_2 = 10$. These intra-matrix angles predominantly range between 55° and 90° (See Web Figure 5), indicating that the individual latent dimensions represent distinct symptom profiles rather than highly correlated, redundant signals.

A major advantage of HPPCA over traditional methods is its ability to learn separate linear subspaces for time-invariant, between-subject static traits ($\bm{W}_1$) and time-varying, within-subject dynamic fluctuations ($\bm{W}_2$). Figure~\ref{fig:recover_heatmaps} presents one set of estimated factor loadings at $d_1 = d_2 = 5$.

\begin{figure}[!htbp]
    \centering
    \includegraphics[width=\linewidth]{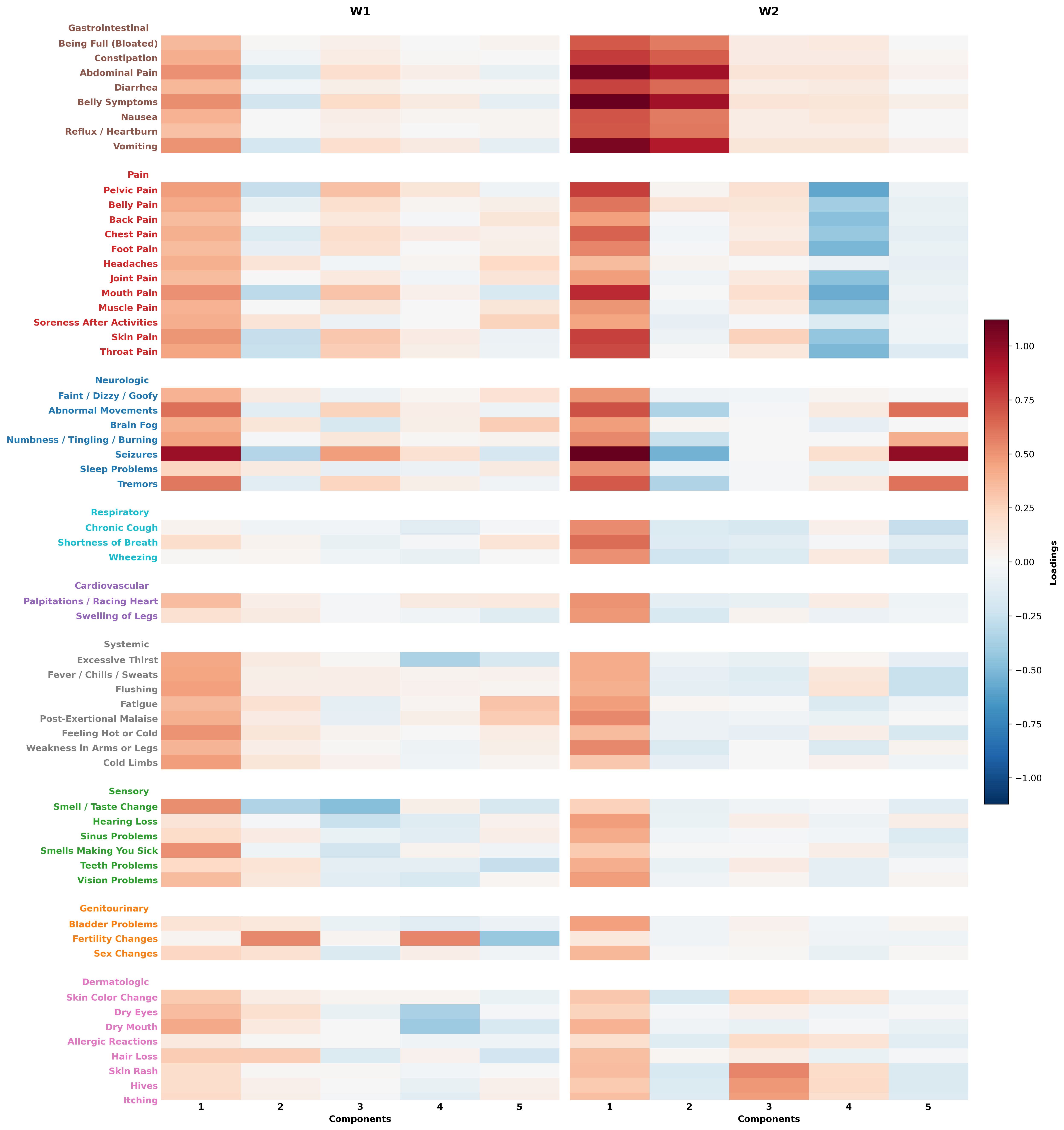}
    \caption{\textbf{Disentangled Static and Dynamic Symptom Phenotypes in the RECOVER Cohort.} Heatmaps of the estimated factor loading matrices $\widehat{\bm{W}}_1$ (between-subject static traits, left) and $\widehat{\bm{W}}_2$ (within-subject dynamic fluctuations, right) from the HPPCA model with $d_1 = d_2 = 5$. The 57 physical symptoms (y-axis) are grouped by physiological system. Color intensity reflects the standardized loading values.}
    \label{fig:recover_heatmaps}
\end{figure}

We grouped the long COVID symptoms into nine categories: gastrointestinal, pain, neurologic, respiratory, cardiovascular, systemic, sensory, genitourinary, and dermatologic. This hierarchical decomposition provides insights into long COVID phenotypes. As shown in Figure \ref{fig:recover_heatmaps}, respiratory symptoms exhibit low absolute loadings in $\bm{W}_1$ across all components but high values in the first component of $\bm{W}_2$. This indicates that respiratory issues are not consistent over time; rather, they are more likely to represent transient acute symptoms or exhibit significant temporal drift. In contrast, symptoms related to pain, gastrointestinal tract, neurology, and genitourinary changes are more likely to exert long-lasting, persistent effects on patients. Furthermore, we observed that the five components of $\bm{W}_2$ represent five distinct symptom profiles that change with time: the first acts as a comprehensive severity score, while the subsequent components isolate gastrointestinal, dermatologic, pain-focused, and neurologic symptom clusters, respectively.

\subsection{Downstream Clinical Prediction}
To validate the clinical utility of the learned latent representations, we evaluated their ability to predict multiple outcomes: the patient-reported outcomes measurement information system (PROMIS) \citep{hays2009development} global health (PROMIS-01) score, an independent 5-category ordinal assessment of overall health; and PROMIS-02 score, another independent 5-category ordinal assessment of quality of life. Similar to previous simulation settings, when HPPCA was constrained to $d_1=d_2=d/2$, the existing methods were granted a matched total latent dimension of $d$.

We adopted a stratified 5-fold cross-validation scheme grouped by subjects. Since mFPCA cannot natively produce visit-specific scores from its formulas, we compared HPPCA against PPCA. For HPPCA, the feature vector was the concatenation of the two-level latent representation $[\widehat{\bm{Z}}_{1,i}^\top, \widehat{\bm{Z}}_{2,ij}^\top]^\top \in \mathbb{R}^{d_1 + d_2}$. We trained four supervised learning classifiers: logistic regression, a linear support vector machine (SVM), random forest, and a histogram-based gradient boosting tree classifier. For the tree-based ensembles, the random forest was configured with 100 estimators, while the gradient boosting model utilized a learning rate of 0.05 and a maximum depth of 4. All models were implemented in Python using \emph{scikit-learn} \citep{scikit-learn}.

Figure~\ref{fig:visit_prediction} shows that HPPCA embeddings achieve higher balanced accuracy than those from PPCA embeddings across logistic regression, a linear SVM and a gradient boosting model for both outcomes. Although PPCA outperforms HPPCA when using the random forest classifier, this specific model yields a lower balanced accuracy overall compared to the other downstream models, suggesting that random forest is not a good fit for this task. We also consider Cohen's kappa as a secondary performance metric, which yields similar conclusions (Web Appendix~D.5 and Web Figure~6). By effectively pooling information across a patient's trajectory, HPPCA gains from aggregated patient information, whereas PPCA treats each visit independently, confounding temporal signals with patient-specific intercepts.

\begin{figure}[!htbp]
    \centering
    \includegraphics[width=\linewidth]{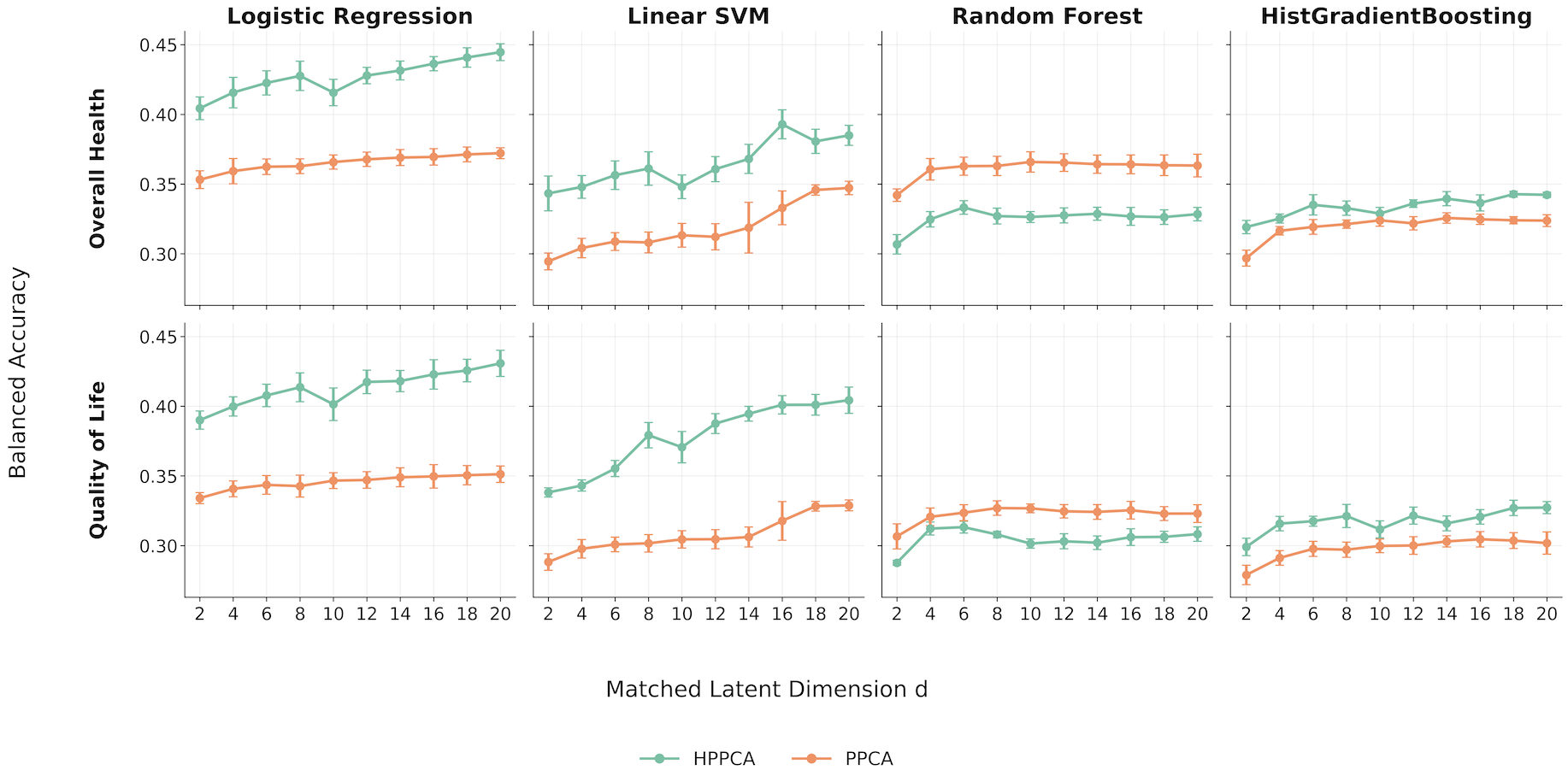} 
    \caption{\textbf{Subject-Visit Level Overall Health and Quality of Life Prediction.} Balanced accuracy for classifying the 5-category PROMIS-01 overall health score and PROMIS-02 quality of life score at specific longitudinal visits. Error bars represent standard deviations across 5-fold cross-validation.}
    \label{fig:visit_prediction}
\end{figure}

\subsection{Missing Data Imputation on Masked Records}
Finally, to quantitatively benchmark HPPCA's ability to recover missing data, we conducted a masking experiment. We randomly masked an additional $20\%$ of the observed entries across the dataset and re-fitted the models. The overall missing rate of this masked dataset is 42.86\%. We benchmarked HPPCA ($d_1=d_2=d/2$) against standard PPCA and the two mFPCA variants, supplying the baselines with a matched total rank of $d$. 

\begin{figure}[!htbp]
    \centering
    \includegraphics[width=0.8\linewidth]{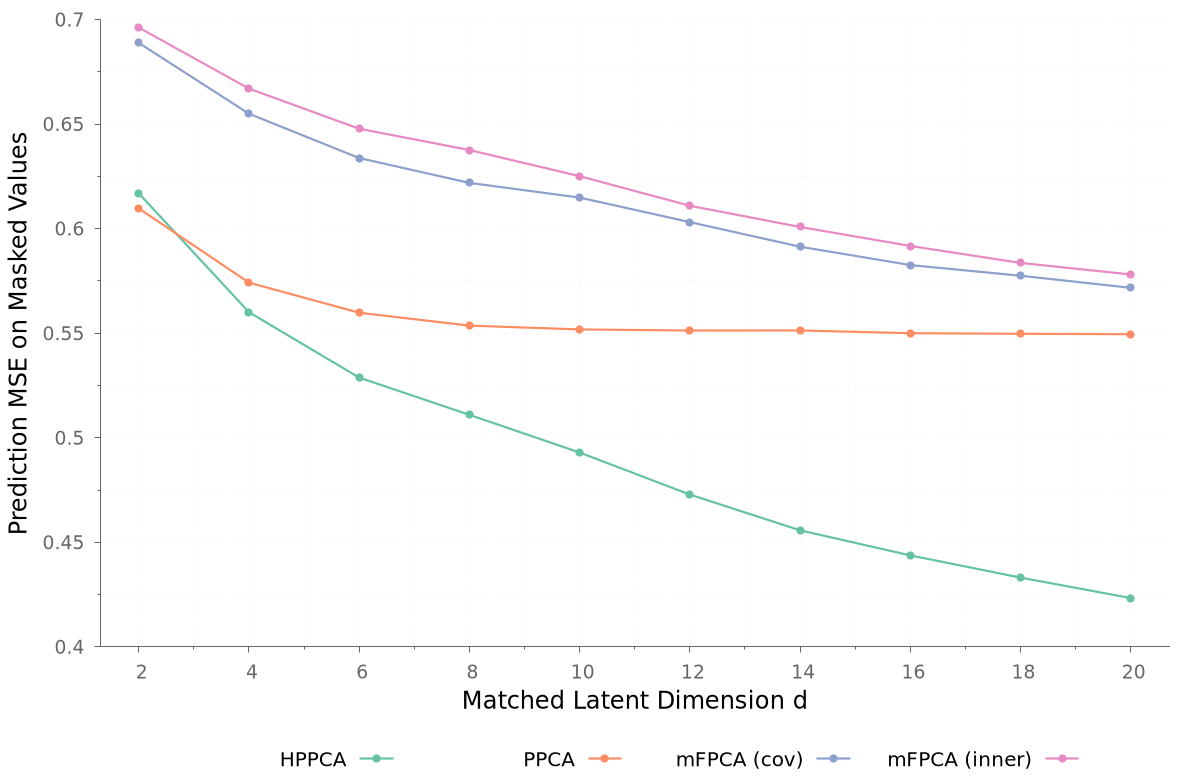} 
    \caption{\textbf{Imputation Accuracy on Masked Clinical Records.} Prediction MSE evaluated exclusively on $20\%$ randomly masked held-out symptoms.}
    \label{fig:recover_imputation}
\end{figure}

As shown in Figure~\ref{fig:recover_imputation}, HPPCA consistently achieves the lowest imputation MSE evaluated strictly on the masked values when the matched dimension $d$ is greater than 2. PPCA gives comparable MSE under $d = 2, 4$; however, when the matched latent dimension continues to increase, the imputation MSE begins to stabilize, indicating that the PPCA model does not capture additional information from the data as effectively as our model does. The two mFPCA variants have a higher imputation MSE than HPPCA and PPCA across all the dimensions, though it continues to drop when the latent dimension increases. 

For this specific task, the mFPCA(inner) method requires over 100 GB of memory for a single model fit. Because its space complexity scales at $O(n^2)$, it is ill-suited for large datasets, particularly compared to the linear $O(n)$ memory cost of HPPCA. Since mFPCA relies on cross-sectional covariance smoothing over time, the high-dimensional irregularity and sparsity of the RECOVER visit schedule cause its functional eigen-decomposition to degrade. In contrast, HPPCA's continuous-time GP structure naturally bridges these observational gaps, utilizing both the subject's static trait $\bm{Z}_{1,i}$ and localized temporal trajectory $\bm{Z}_{2,ij}$ to optimally reconstruct the missing features.

\section{Discussion}\label{sec:discussion}

In this work, we introduced the HPPCA model, as an extension of the PPCA model to longitudinal data. HPPCA is a probabilistic two-level factor model that separates a between-subject structure from time-varying within-subject dynamics. We derived a practical EM procedure that natively accommodates subject-specific irregular observation times, general continuous-time kernels, and heavy missing data, and accelerated it by initializers derived under balanced temporal conditions with mild assumptions on the temporal covariance structure.

As shown in extensive simulation studies, HPPCA accurately recovers the underlying parameter subspaces even under extreme missingness, a sparse regime where traditional functional methods encounter numerical instability. Furthermore, HPPCA delivers substantially lower imputation error than PPCA and multivariate functional PCA (mFPCA) variants given the correct model assumption, particularly when missingness is heavy or temporal correlation is strong. It also demonstrates remarkable robustness to structural misspecification, maintaining good predictive performance even when the true latent dynamics follow discrete-time autoregressive processes. 

In a large longitudinal adult cohort from the NIH RECOVER initiative, HPPCA successfully disentangled chronic baseline vulnerabilities from episodic symptom flare-ups. HPPCA gains increasing model-based variance explained with model dimension. By isolating these distinct clinical phenotypes, the learned joint embeddings significantly improve predictive performance on visit-level downstream tasks relative to PPCA. Finally, HPPCA achieved the lowest imputation mean square error in recovering artificially masked symptom burden records, demonstrating its utility as a interpretable missing data imputation method.

Several future directions naturally emerge from this work. Methodologically, the model can be further generalized by incorporating non-stationary or subject-specific kernels, accommodating heteroskedastic noise, or allowing for non-linear transformations in the loadings. Computationally, although our closed-form initializers significantly accelerate convergence, the EM algorithm can still be intensive for exceptionally large cohorts, so developing faster and more scalable inference procedures is an important avenue.
\newpage


\section*{Acknowledgements}
We thank the participants, investigators, and study staff of the Researching COVID to Enhance Recovery (RECOVER) Initiative. Data for this study were analyzed using the NHLBI BioData Catalyst\textregistered (BDC) ecosystem. We thank the NHLBI for providing access to these datasets and the BDC community for technical support. We are also grateful to Thomas Scott Jr Caldwell for his ideas and effort on accelerating the code. We acknowledge the use of Gemini for grammatical editing and stylistic improvements during the preparation of this manuscript.



\section*{Data Availability}
The data used for this study were accessed through the NHLBI BDC ecosystem (\url{https://biodatacatalyst.nhlbi.nih.gov}). Researchers can request access via the RECOVER Data page by submitting proposals and using NIH eRA Commons credentials.

\section*{Supplementary Material}
Web Appendices, Tables, Figures and Algorithms for Sections 2, 3, 4 are available with this article. We also provide code for implementing the proposed HPPCA method, available with the Supplementary Material and on GitHub repository at \url{https://github.com/dlin-group/HPPCA}.


\newpage

\setcounter{figure}{0}
\setcounter{table}{0}
\setcounter{algorithm}{0}
\setcounter{equation}{0}

\newcounter{webappendix}
\renewcommand{\theequation}{\Alph{webappendix}.\arabic{equation}}
\renewcommand{\thesubsection}{\Alph{webappendix}.\arabic{subsection}}
\renewcommand{\thesubsubsection}{\Alph{webappendix}.\arabic{subsection}.\arabic{subsubsection}}
\captionsetup[figure]{name=Web Figure}
\captionsetup[table]{name=Web Table}
\newcommand{\webappendixsection}[1]{%
  \clearpage
  \refstepcounter{webappendix}%
  \setcounter{equation}{0}%
  \setcounter{subsection}{0}%
  \setcounter{subsubsection}{0}%
  \section*{Web Appendix \Alph{webappendix}: #1}%
}
\floatname{algorithm}{Web Algorithm}
\renewcommand{\thealgorithm}{\arabic{algorithm}}

\webappendixsection{Details of the EM Algorithm for HPPCA}
This supplementary material provides detailed mathematical derivations supporting the EM algorithm presented in Section~2.2, fully accommodating arbitrary item-level missingness, subject-specific observation times $\{t_{ij}\}_{j \in \mathcal{O}_i}$, and unequal number of surveys $J_i$.

\subsection{E-step}
The E-step computes the expectation of the complete-data log-likelihood. The conditional expectation of the complete-data log likelihood given the observed data and current estimated parameters is given by
\begin{align*}
    & \mathbb{E}\big[\, l_{i}(\bm{\theta}) \mid \bm{Y}_i^{(o)}, \bm{\theta}^{(t)}\,\big]  \\
     = &  - \frac{pJ_i}{2} \log(2\pi \sigma^2) - \frac{1}{2} \tr(\langle \bm{Z}_{1, i} \bm{Z}_{1, i}^\top \rangle) - \frac{1}{2}\sum_{r = 1}^{d_2}\log (\det(\bm{\Sigma}_i^{(r)})) -  \frac{1}{2}\sum_{r = 1}^{d_2} \tr(\langle \bm{z}_{2, i}^{(r)}  \bm{z}_{2, i}^{(r)\top}\rangle (\bm{\Sigma}^{(r)}_i)^{-1})\\
& - \sum_{j \in \mathcal{O}_i} \bigg \{\frac{1}{2\sigma^2}\bm{Y}_{ij}^{(o)\top} \bm{Y}_{ij}^{(o)} + \frac{1}{2\sigma^2}\tr(\langle\bm{Y}_{ij}^{(m)} \bm{Y}_{ij}^{(m)\top}\rangle) + 
\frac{1}{2\sigma^2}\tr(\bm{W}_1^\top \bm{W}_1 \langle \bm{Z}_{1,i} \bm{Z}_{1, i}^\top\rangle)  \\
& + 
\frac{1}{2\sigma^2}\tr(\bm{W}_2^\top \bm{W}_2 \langle \bm{Z}_{2,ij} \bm{Z}_{2, ij}^\top\rangle) + \frac{1}{\sigma^2}\tr(\bm{W}_1^\top \bm{W}_2 \langle \bm{Z}_{2,ij} \bm{Z}_{1, i}^\top\rangle) - \frac{1}{\sigma^2} \langle \bm{Z}_{1, i}^\top  \rangle \bm{W}_1^{(o, ij)\top}\bm{Y}_{ij}^{(o)}\\
& - \frac{1}{\sigma^2} \langle \bm{Z}_{2, ij}^\top  \rangle \bm{W}_2^{(o, ij)\top}\bm{Y}_{ij}^{(o)}- \frac{1}{\sigma^2} \tr(\bm{W}_1^{(m, ij)\top}\langle \bm{Y}_{ij}^{(m)} \bm{Z}_{1, i}^\top\rangle ) - \frac{1}{\sigma^2} \tr(\bm{W}_2^{(m, ij)\top}\langle \bm{Y}_{ij}^{(m)} \bm{Z}_{2, ij}^\top \rangle )
\bigg \}.
\end{align*}

With the posterior distribution of the latent factors established in Section~2.2, we calculate the conditional moments involving the missing data. By taking the expectation of the generative model $\bm{Y}_{ij}^{(m)} = \bm{W}_1^{(m,ij)}\bm{Z}_{1,i} + \bm{W}_2^{(m,ij)}\bm{Z}_{2,ij} + \sigma \bm{\varepsilon}_{ij}^{(m)}$ conditioned on the observed data $\bm{Y}_i^{(o)}$, we obtain
\begin{align*}
    \langle \bm{Y}_{ij}^{(m)}\rangle  &= \bm{W}_1^{(m, ij)} \langle \bm{Z}_{1, i} \rangle + \bm{W}_2^{(m, ij)}\langle\bm{Z}_{2, ij}\rangle, \\
   \langle \bm{Y}_{ij}^{(m)}\bm{Y}_{ij}^{(m)\top}\rangle  &= \bm{W}_1^{(m, ij)} \langle\bm{Z}_{1, i} \bm{Z}_{1, i}^\top \rangle \bm{W}_1^{(m, ij)\top} + \bm{W}_2^{(m, ij)} \langle\bm{Z}_{2, ij} \bm{Z}_{2,ij}^\top \rangle \bm{W}_2^{(m, ij)\top}\\
    &\quad + \bm{W}_1^{(m, ij)} \langle\bm{Z}_{1, i} \bm{Z}_{2,ij}^\top \rangle \bm{W}_2^{(m, ij)\top} + \big(\bm{W}_1^{(m, ij)} \langle\bm{Z}_{1, i} \bm{Z}_{2,ij}^\top \rangle \bm{W}_2^{(m, ij)\top}\big)^\top + \sigma^2 \bm{I}_{p_{ij}^{(m)}},\\
    \langle \bm{Y}_{ij}^{(m)} \bm{Z}_{1, i}^\top\rangle  &= \bm{W}_1^{(m, ij)} \langle \bm{Z}_{1, i}\bm{Z}_{1, i}^\top \rangle + \bm{W}_2^{(m, ij)} \langle \bm{Z}_{2, ij}\bm{Z}_{1, i}^\top \rangle, \\
    \langle \bm{Y}_{ij}^{(m)} \bm{Z}_{2, ij}^\top\rangle &= \bm{W}_1^{(m, ij)} \langle \bm{Z}_{1, i}\bm{Z}_{2, ij}^\top \rangle + \bm{W}_2^{(m, ij)} \langle \bm{Z}_{2, ij}\bm{Z}_{2, ij}^\top \rangle.
\end{align*}
where $p_{ij}^{(m)}$ is the number of missing entries in $\bm{Y}_{ij}$.
\subsection{M-step}
Let $\bm{m}_{ij} \in \{0,1\}^p$ be the missingness mask for observation $\bm{Y}_{ij}$, where $m_{ij,k}=1$ if the $k$th entry is missing. To evaluate the M-step updates seamlessly, we construct the full pseudo-data cross-moment matrices $\widetilde{\bm{G}}_{ij}^{(1)} = \langle \bm{Y}_{ij}\bm{Z}_{1, i}^\top \rangle \in \mathbb{R}^{p \times d_1}$ and $\widetilde{\bm{G}}_{ij}^{(2)} = \langle \bm{Y}_{ij} \bm{Z}_{2, ij}^\top \rangle \in \mathbb{R}^{p \times d_2}$ for the $j$th survey of the $i$th subject. The elements of these matrices are assembled by explicitly mapping the partitioned observed and missing components back to their original feature dimensions based on the missingness mask $\bm{m}_{ij}$.

For the $k$th feature ($k = 1, \dots, p$) and $r$th subject level latent dimension ($r = 1, \dots, d_1$), the $(k, r)$th element of $\widetilde{\bm{G}}_{ij}^{(1)}$ is defined as:
\[
\big(\widetilde{\bm{G}}^{(1)}_{ij}\big)_{kr} = 
\begin{cases}
    Y_{ij, k'}^{(o)} \langle \bm{Z}_{1, i} \rangle_r &  \text{if } m_{ij, k} = 0, \\[2mm]
    \big(\langle \bm{Y}_{ij}^{(m)} \bm{Z}_{1, i}^\top \rangle\big)_{k''r} & \text{if } m_{ij, k} = 1,
\end{cases}
\]
where $k'$ is the sequential index in the observed sub-vector $\bm{Y}_{ij}^{(o)}$ that corresponds to the original $k$th feature of $\bm{Y}_{ij}$, and $k''$ is the sequential index in the missing sub-vector $\bm{Y}_{ij}^{(m)}$ that corresponds to the original $k$th feature.

Likewise, for the $k$th feature ($k = 1, \dots, p$) and $r$th subject-visit level latent dimension ($r = 1, \dots, d_2$), the $(k, r)$th element of the cross-moment matrix $\widetilde{\bm{G}}_{ij}^{(2)}$ is constructed as:
\[
\big(\widetilde{\bm{G}}^{(2)}_{ij}\big)_{kr} = 
\begin{cases}
    Y_{ij, k'}^{(o)} \langle \bm{Z}_{2, ij} \rangle_r &  \text{if } m_{ij, k} = 0, \\[2mm]
    \big(\langle \bm{Y}_{ij}^{(m)} \bm{Z}_{2, ij}^\top \rangle\big)_{k''r} & \text{if } m_{ij, k} = 1.
\end{cases}
\]

 In the M-step, we maximize $Q(\bm{\theta} \mid \bm{\theta}^{(t)})$. We take the derivative of the $Q$-function with respect to $\bm{W}_1$ and $\bm{W}_2$ and set them to zero:
\begin{align*}
    \frac{\partial Q}{\partial \bm{W}_1} &= \frac{1}{\sigma^2}\sum_{i=1}^n \sum_{j \in \mathcal{O}_i} \Big( \widetilde{\bm{G}}_{ij}^{(1)} - \bm{W}_1 \langle \bm{Z}_{1, i} \bm{Z}_{1, i}^\top \rangle - \bm{W}_2 \langle \bm{Z}_{2, ij} \bm{Z}_{1, i}^\top \rangle \Big) = \bm{0}, \\
    \frac{\partial Q}{\partial \bm{W}_2} &= \frac{1}{\sigma^2}\sum_{i=1}^n \sum_{j \in \mathcal{O}_i} \Big( \widetilde{\bm{G}}_{ij}^{(2)} - \bm{W}_1 \langle \bm{Z}_{1, i} \bm{Z}_{2, ij}^\top \rangle - \bm{W}_2 \langle \bm{Z}_{2, ij} \bm{Z}_{2, ij}^\top \rangle \Big) = \bm{0}.
\end{align*}
Factoring $[\bm{W}_1, \bm{W}_2]$ out of the sums natively recovers the block matrix system presented in equation~(8) of the main text. Fixing the updated $\widetilde{\bm{W}}_1$ and $\widetilde{\bm{W}}_2$, we take the derivative of the $Q$-function with respect to $\sigma^2$ and setting to zero: 
\begin{equation*}
     \frac{\partial Q}{\partial \sigma^2} = -\frac{1}{\sigma^2}\sum_{i = 1}^n \frac{pJ_i}{2}  + \frac{1}{2\sigma^4}\sum_{i = 1}^n \sum_{j \in \mathcal{O}_i}\langle \|\bm{Y}_{ij} - \widetilde{\bm{W}}_1 \bm{Z}_{1, i} - \widetilde{\bm{W}}_2 \bm{Z}_{2, ij}\|^2_2 \rangle  = 0.
\end{equation*}
This gives the closed form update of $\sigma^2$ shown in equation.~(9) of the main text.

\subsection{M-step: Kernel Length-Scale Updates}
\label{webapp:sec:kernel_updates}

Our framework gracefully accommodates both a shared temporal length-scale across all dynamic dimensions and dimension-specific length-scales. Because the temporal priors are structurally isomorphic, we unify their M-step Newton-Raphson updates to avoid mathematical redundancy. Let $\ell$ denote the target generic length-scale parameter being updated. We define its corresponding effective dimensionality $d^*$ and the pooled posterior second-moment matrix $\bm{C}_{2,i}^* \in \mathbb{R}^{J_i \times J_i}$ as follows:
\begin{itemize}
    \item \textbf{Dimension-Specific Length-Scales ($\ell^{(r)}$):} We update each dimension $r \in \{1, \dots, d_2\}$ independently. We set $d^* = 1$ and $\bm{C}_{2,i}^* = \langle \bm{z}_{2, i}^{(r)}  \bm{z}_{2, i}^{(r)\top} \rangle$.
    \item \textbf{Shared Length-Scale ($\ell$):} We pool information across all $d_2$ dimensions. We set $d^* = d_2$ and $\bm{C}_{2,i}^* = \sum_{r=1}^{d_2} \langle \bm{z}_{2, i}^{(r)}  \bm{z}_{2, i}^{(r)\top} \rangle$.
\end{itemize}

Given the specific formulation of $\ell$, $d^*$, and $\bm{C}_{2,i}^*$, the relevant portion of the $Q$-function simplifies to a single objective:
\begin{align*}
    Q(\ell) = - \frac{1}{2}\sum_{i=1}^n \left\{ d^* \log \det(\bm{\Sigma}_i) +  \operatorname{tr}\!\big(\bm{C}_{2,i}^* \bm{\Sigma}_i^{-1} \big) \right\}.
\end{align*}
To streamline the matrix calculus, we define the auxiliary residual precision matrix $\bm{A}_{i} = \bm{\Sigma}_i^{-1}\bigl(\bm{C}_{2,i}^* - d^*\bm{\Sigma}_i\bigr)\bm{\Sigma}_i^{-1}$. The exact analytical gradient equals to
\begin{align*}
    g(\ell) &= \frac{\partial Q(\ell)}{\partial \ell} = \frac{1}{2} \sum_{i=1}^n \operatorname{tr}\!\Big( \bm{A}_{i} \frac{\partial \bm{\Sigma}_i}{\partial \ell} \Big).
\end{align*}

\paragraph{Radial Basis Function (RBF) Kernel.}
For the RBF kernel, the exact first derivative of the temporal covariance matrix is $\partial_\ell \bm{\Sigma}_i = \ell^{-3}(\bm{\Sigma}_i \odot \bm{\Delta}_i)$, where $(\bm{\Delta}_i)_{a,b} = (t_{i, j_a} - t_{i, j_b})^2$, and $\odot$ denotes the Hadamard product. The gradient is elegantly expressed as $g(\ell) = \frac{1}{2\ell^3} \sum_i \operatorname{tr}(\bm{A}_i\bm{B}_i)$, where $\bm{B}_{i} = \bm{\Sigma}_i \odot \bm{\Delta}_i$. 

By applying the cyclic permutation property of the trace operator, the analytical Hessian $h(\ell) = \partial g(\ell)/\partial \ell$ precisely simplifies to
\begin{align*}
    h(\ell) =& - \frac{3}{\ell} g(\ell) + \frac{1}{2\ell^6}\sum_{i=1}^n\Big[ \operatorname{tr}\big\{ \bm{A}_{i}(\bm{\Sigma}_i \odot \bm{\Delta}_i^{\odot 2})\big\} \\
    & \quad - 2\operatorname{tr}\big( \bm{\Sigma}_i^{-1}\bm{B}_{i} \bm{A}_{i} \bm{B}_{i}\big) - d^* \operatorname{tr}\big( \bm{\Sigma}_i^{-1}\bm{B}_{i} \bm{\Sigma}_i^{-1} \bm{B}_{i}\big) \Big],
\end{align*}
where $\bm{\Delta}_i^{\odot 2}$ denotes the element-wise square of the temporal distance matrix.

\paragraph{Mat\'ern $\nu=5/2$ Kernel.}
Let $d_{ab} = |t_{i, j_a} - t_{i, j_b}|$ denote the Euclidean temporal distance. The exact first derivative matrix $\bm{K}_i^{(1)} = \partial_\ell \bm{\Sigma}_i$ equals to
\begin{equation*}
    (\bm{K}_i^{(1)})_{a,b} = \exp\!\left(-\frac{\sqrt{5}d_{ab}}{\ell}\right) \frac{5d_{ab}^2}{3\ell^4} (\ell + \sqrt{5}d_{ab}).
\end{equation*}
The exact second derivative matrix $\bm{K}_i^{(2)} = \partial^2_\ell \bm{\Sigma}_i$ expands to
\begin{equation*}
    (\bm{K}_i^{(2)})_{a,b} = \exp\!\left(-\frac{\sqrt{5}d_{ab}}{\ell}\right) \frac{5d_{ab}^2}{3\ell^6} \Big( 5d_{ab}^2 - 3\sqrt{5}d_{ab}\ell - 3\ell^2 \Big).
\end{equation*}
Using the gradient $g(\ell) = \frac{1}{2}\sum_i \operatorname{tr}(\bm{A}_{i} \bm{K}_i^{(1)})$ and leveraging identical trace cyclic simplifications, the exact Hessian is written as:
\begin{align*}
    h(\ell) &= \frac{1}{2}\sum_{i=1}^{n} \Big\{ \operatorname{tr}\big(\bm{A}_{i} \bm{K}_i^{(2)}\big) - 2\operatorname{tr}\big(\bm{\Sigma}_i^{-1}\bm{K}_i^{(1)}\bm{A}_{i}\bm{K}_i^{(1)}\big) - d^* \operatorname{tr}\big(\bm{\Sigma}_i^{-1}\bm{K}_i^{(1)}\bm{\Sigma}_i^{-1}\bm{K}_i^{(1)}\big) \Big\}.
\end{align*}

By substituting the appropriate $d^*$ and $\bm{C}_{2,i}^*$, they provide Newton-Raphson updates $\widetilde{\ell} = \ell - g(\ell)/h(\ell)$ for arbitrary parameterizations and varying subject-specific grids $J_i$.

\webappendixsection{Proofs and Algorithms}
\label{supp:proofs}

\subsection{Proof of Lemma 1}

\begin{proof}[Proof of Lemma~1]
We first use the standard property of the vectorization operator, $\vecop(\bm A\bm B\bm C) = (\bm C^\top\otimes \bm A)\vecop(\bm B)$. Applying this to the rotation $\widetilde{\bm Y}_i = \bm I_p \bm Y_i \bm U_T$, we have
\[
\vecop(\widetilde{\bm Y}_i)=\vecop(\bm I_p \bm Y_i\bm U_T)
=
(\bm U_T^\top\otimes \bm I_p)\bm y_i.
\]
Because $\widetilde{\bm Y}_i$ is a linear transformation of the Gaussian random vector $\bm y_i$, it remains jointly Gaussian. Under the surrogate model, the covariance of $\bm y_i$ is 
\[
\Cov(\bm y_i) = \one_J\one_J^\top \otimes \A + \bm{\Sigma}\otimes \B + \sigma^2 \bm I_{pJ}.
\]
Therefore, the covariance of the rotated data is
\begin{align*}
\Cov\!\big[\vecop(\widetilde{\bm Y}_i)\big]
&=
(\bm U_T^\top\otimes \bm I_p)
\Cov(\bm y_i)
(\bm U_T\otimes \bm I_p).
\end{align*}
Applying the mixed-product property of Kronecker products, $(\bm A \otimes \bm B)(\bm C \otimes \bm D) = (\bm A\bm C) \otimes (\bm B\bm D)$, the covariance matrix distributes as follows:
\begin{align*}
\Cov\!\big[\vecop(\widetilde{\bm Y}_i)\big]
&=
(\bm U_T^\top\one_J\one_J^\top\bm U_T)\otimes (\bm I_p \A \bm I_p)
+
(\bm U_T^\top\bm\Sigma\bm U_T)\otimes (\bm I_p \B \bm I_p)
+
\sigma^2 (\bm U_T^\top \bm I_J \bm U_T \otimes \bm I_p \bm I_p)\\
&=
(\bm U_T^\top\one_J\one_J^\top\bm U_T)\otimes \A
+
(\bm U_T^\top\bm\Sigma\bm U_T)\otimes \B
+
\sigma^2 \bm I_{pJ}.
\end{align*}
By Assumption~1, $\bm U_T$ is an orthogonal matrix whose first column is $\bm u_1 = \one_J/\sqrt{J}$, and its columns are the eigenvectors of $\bm \Sigma$. Therefore, the matrix-vector product $\bm U_T^\top \one_J$ becomes a vector with $\sqrt{J}$ in the first coordinate and $0$ elsewhere. Its outer product with itself gives a diagonal matrix with exactly one non-zero entry:
\[
\bm U_T^\top\one_J\one_J^\top\bm U_T=\diag(J,0,\ldots,0).
\]
since $\bm U_T$ diagonalizes $\bm \Sigma$ and $\bm U_T^\top \bm U_T = \bm I_J$, we have
\[
\bm U_T^\top\bm\Sigma\bm U_T=\diag(\lambda_1,\ldots,\lambda_J).
\]
Substituting these diagonal matrices into our expression for the covariance yields a perfectly block-diagonal structure:
\[
\Cov\!\big[\vecop(\widetilde{\bm Y}_i)\big]
=
\diag\!\Big(
J\A+\lambda_1\B+\sigma^2 \I_p,\,
\lambda_2\B+\sigma^2 \I_p,\,
\ldots,\,
\lambda_J\B+\sigma^2 \I_p
\Big).
\]
Because $\vecop(\widetilde{\bm Y}_i)$ is jointly Gaussian and its covariance matrix is block-diagonal, the rotated columns $\{\widetilde{\bm Y}_{it}\}_{t=1}^J$ are mutually independent. Their marginal distributions directly correspond to the diagonal blocks.

Due to this mutual independence, the joint Gaussian negative log-likelihood factors cleanly into a sum of independent negative log-likelihoods for each column:
\[
\mathcal L(\A,\B,\sigma^2)
=
\mathcal L_A(\A\mid \B,\sigma^2)
+
\mathcal L_B(\B,\sigma^2)
+
\text{constant},
\]
with the mean-direction $\mathcal L_A$ corresponding to $t=1$ and the contrast-direction $\mathcal L_B$ grouping the components for $t=2,\ldots,J$, as given by Section~2.
\end{proof}

\subsection{Proof of Theorem 1}

\begin{proof}[Proof of Theorem~1]
For fixed $\B$ and $\sigma^2>0$, we define the background covariance for the mean direction as
\[
\bm H=\lambda_1\B+\sigma^2 \I_p.
\]
Because $\sigma^2 > 0$ and $\B \succeq 0$, $\bm H$ is strictly positive definite ($\bm H \succ 0$). To simplify the optimization over $\A$, we use a transformation to standardize the parameter space. Let
\[
\bm T=\bm H^{-1/2}\bm S_1\bm H^{-1/2},
\qquad
\bm M=J\bm H^{-1/2}\A\bm H^{-1/2}.
\]
Since $\bm H^{-1/2}$ is symmetric and strictly positive definite, the original constraints $\A \succeq 0$ and $\rank(\A) \le d_1$ translate directly to $\bm M\succeq 0$ and $\rank(\bm M)\le d_1$. We can rewrite the marginal covariance for $t=1$ as
\[
J\A+\bm H=\bm H^{1/2}\bm M\bm H^{1/2} + \bm H^{1/2} \I_p \bm H^{1/2} = \bm H^{1/2}(\bm M + \I_p)\bm H^{1/2}.
\]
We now express the objective $\mathcal L_A$ in terms of $\bm M$. Using the multiplicative property of determinants, we have
\[
\log\det(J\A+\bm H) = \log\det(\bm H) + \log\det(\bm M + \I_p).
\]
For the trace term, we use the cyclic permutation property of the trace:
\begin{align*}
\tr\!\big[\bm S_1(J\A+\bm H)^{-1}\big] &= \tr\!\big[\bm S_1 \bm H^{-1/2}(\bm M + \I_p)^{-1}\bm H^{-1/2}\big] \\
&= \tr\!\big[\bm H^{-1/2}\bm S_1\bm H^{-1/2}(\bm M + \I_p)^{-1}\big] \\
&= \tr\!\big[\bm T(\bm M + \I_p)^{-1}\big].
\end{align*}
Thus, up to the additive constant $(n/2)\log\det(\bm H)$, which is independent of $\A$, minimizing $\mathcal L_A$ is equivalent to minimizing the surrogate objective
\[
\psi(\bm M)
=
\log\det(\bm M + I_p)
+
\tr\!\big[\bm T(\bm M + I_p)^{-1}\big]
\]
over $\bm M\succeq 0$ with $\rank(\bm M)\le d_1$.

To explicitly enforce the positive semi-definiteness and rank constraints, we parameterize $\bm M$ using a factor loading matrix $\bm W \in \mathbb{R}^{p \times d_1}$ such that $\bm M = \bm W\bm W^\top$. The optimization problem then becomes minimizing
\[
\psi(\bm W) = \log\det(\bm W\bm W^\top + \I_p) + \tr\!\big[\bm T(\bm W\bm W^\top + \I_p)^{-1}\big]
\]
over the unconstrained matrix $\bm W \in \mathbb{R}^{p \times d_1}$.

Crucially, this objective function is mathematically identical to the negative log-likelihood of a PPCA model \citep{tipping1999probabilistic}, where $\bm T$ plays the role of the empirical sample covariance matrix, $d_1$ is the latent dimensionality, and the isotropic noise variance is explicitly fixed at $\sigma^2 = 1$. 

Taking the derivative of $\psi(\bm W)$ with respect to $\bm W$ and equating it to zero yields the exact stationary condition:
\[
\bm T(\bm W\bm W^\top + \I_p)^{-1}\bm W = \bm W.
\]
Let the eigendecomposition of the whitened sample covariance be
\[
\bm T=\bm V\diag(t_1,\ldots,t_p)\bm V^\top,
\qquad
t_1\ge\cdots\ge t_p\ge 0.
\]
As proven by \citet{tipping1999probabilistic}, the stable minimum of the objective is achieved when the columns of $\bm W$ span the principal eigenspace of the sample covariance $\bm T$. The optimal loading matrix is given by:
\[
\bm W^\star = \bm V_{d_1} (\bm \Lambda_{d_1} - \I_{d_1})_+^{1/2} \bm R,
\]
where $\bm V_{d_1} \in \mathbb{R}^{p \times d_1}$ contains the eigenvectors corresponding to the largest $d_1$ eigenvalues of $\bm T$, $\bm \Lambda_{d_1} = \diag(t_1, \ldots, t_{d_1})$, and $\bm R$ is an arbitrary $d_1 \times d_1$ orthogonal rotation matrix. The positive thresholding operation $(\cdot)_+ = \max(\cdot, 0)$ guarantees that the singular values remain real-valued, which corresponds to setting the latent variance to zero for any dimensions where the sample variance $t_k$ does not exceed the assumed noise variance of $1$.

Returning to our original surrogate matrix $\bm M^\star = \bm W^\star \bm W^{\star\top}$, the arbitrary rotation $\bm R$ exactly cancels out ($\bm R \bm R^\top = \I_{d_1}$), yielding a unique global solution:
\[
\bm M^\star = \bm V_{d_1} (\bm \Lambda_{d_1} - \I_{d_1})_+ \bm V_{d_1}^\top.
\]
This expression precisely corresponds to applying the positive spectral thresholding operator to $\bm T - I_p$, which retains the largest $d_1$ positive eigenvalues and sets the rest to zero:
\[
\bm M^\star = \Pi_+^{(d_1)}(\bm T - \I_p).
\]
Transforming back to the original parameterization of $\A$ yields:
\[
\A^\star
=
\frac1J\bm H^{1/2}\bm M^\star\bm H^{1/2}
=
\frac1J\bm H^{1/2}
\Pi_+^{(d_1)}
\!\left(
\bm H^{-1/2}\bm S_1\bm H^{-1/2}-\I_p
\right)
\bm H^{1/2},
\]
which proves the result.
\end{proof}

\subsection{Contrast block under Assumption 1}

Write the spectral decomposition of $\B$ as
\[
\B=\Q\diag(b_1,\ldots,b_p)\Q^\top,
\qquad
\Q^\top\Q=\I_p,
\qquad
b_k\ge 0.
\]
For fixed $\Q$, define the rotated variance components
\[
\D_t=\Q^\top \bm S_t \Q,
\qquad
s_{t,k}=(\D_t)_{kk},
\qquad
a_{t,k}=\lambda_t b_k+\sigma^2,
\qquad t=2,\ldots,J.
\]
By factoring out $\Q$, the log-determinant and trace operators in the contrast negative log-likelihood act only on diagonal matrices. Thus, up to a positive constant factor, the contrast contribution is given by
\[
\Phi(\bb,\sigma^2;\Q)
=
\sum_{k=1}^p\sum_{t=2}^J
\left(
\log a_{t,k}+\frac{s_{t,k}}{a_{t,k}}
\right).
\]

\begin{proposition}[Score equations for $(\bb,\sigma^2)$ at fixed $\Q$]
\label{prop:score-bsigma}
For fixed orthogonal $\Q$, the partial derivatives of $\Phi(\bb,\sigma^2;\Q)$ are
\[
\frac{\partial \Phi}{\partial b_k}
=
\sum_{t=2}^J
\lambda_t
\left(
\frac{1}{a_{t,k}}-\frac{s_{t,k}}{a_{t,k}^2}
\right),
\qquad
k=1,\ldots,p,
\]
and
\[
\frac{\partial \Phi}{\partial \sigma^2}
=
\sum_{k=1}^p\sum_{t=2}^J
\left(
\frac{1}{a_{t,k}}-\frac{s_{t,k}}{a_{t,k}^2}
\right).
\]
Hence, any interior stationary point satisfies
\[
\sum_{t=2}^J \lambda_t w_{t,k}\big(s_{t,k}-\lambda_t b_k-\sigma^2\big)=0,
\qquad k=1,\ldots,p,
\]
and
\[
\sum_{k=1}^p\sum_{t=2}^J
w_{t,k}\big(s_{t,k}-\lambda_t b_k-\sigma^2\big)=0,
\qquad
w_{t,k}=a_{t,k}^{-2}.
\]
These are exactly the weighted normal equations associated with the pseudo-regression
\[
s_{t,k}\approx \lambda_t b_k+\sigma^2,
\]
with one common intercept $\sigma^2$ shared across $k$.
\end{proposition}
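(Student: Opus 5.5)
The plan is a direct differentiation of $\Phi(\bb,\sigma^2;\Q)$, followed by a rearrangement of the resulting first-order conditions. For fixed $\Q$, the quantities $s_{t,k}=(\D_t)_{kk}$ do not depend on $(\bb,\sigma^2)$. The only parameter dependence therefore enters through $a_{t,k}=\lambda_t b_k+\sigma^2$, which is affine in the parameters, with
\[
\partial a_{t,k}/\partial b_{k'}=\lambda_t\,\delta_{kk'}
\qquad\text{and}\qquad
\partial a_{t,k}/\partial\sigma^2=1 .
\]
For each summand I use the scalar identity
\[
\frac{d}{da}\Big(\log a+\frac{s}{a}\Big)=\frac1a-\frac{s}{a^2},
\]
which is valid because $a_{t,k}\ge\sigma^2>0$. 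The chain rule then gives the two displayed partial derivatives. In $\partial\Phi/\partial b_k$ only the terms with index $k$ survive, each carrying the factor $\lambda_t$. In $\partial\Phi/\partial\sigma^2$ every term survives with factor $1$.

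For the stationarity equations, I rewrite $\frac1a-\frac{s}{a^2}=\frac{a-s}{a^2}=-w\,(s-a)$ with $w=a^{-2}$. Substituting $a_{t,k}=\lambda_t b_k+\sigma^2$, the condition $\partial\Phi/\partial b_k=0$ becomes
\[
\sum_t\lambda_t w_{t,k}\,(s_{t,k}-\lambda_t b_k-\sigma^2)=0,
\]
and the condition $\partial\Phi/\partial\sigma^2=0$ becomes the doubly summed equation. Here I will note that "interior" means $b_k>0$ for all $k$ and $\sigma^2>0$. Under that condition the constraint $b_k\ge 0$ is inactive, so the gradient must vanish. Boundary points with $b_k=0$ would instead satisfy only a KKT inequality, which is the case the projected IRLS step is designed to handle.

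Finally, I identify these equations as weighted normal equations. Take the linear model $s_{t,k}\approx \lambda_t b_k+\sigma^2$, in which the regressors for the response $s_{t,k}$ are the indicator-scaled $\lambda_t$ for coefficient $b_k$ and a constant $1$ for the common intercept $\sigma^2$. Give it weights $w_{t,k}$ held fixed. The weighted least-squares criterion is
\[
\tfrac12\sum w_{t,k}(s_{t,k}-\lambda_t b_k-\sigma^2)^2 .
\]
Setting its derivatives in $b_k$ and $\sigma^2$ to zero reproduces exactly the two equations above. The weights depend on the parameters themselves, which is precisely why the scheme is IRLS rather than a single weighted least-squares fit.

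The only mildly delicate point is the preliminary claim that $\Phi$ reduces to diagonal entries. This holds because
\[
\Q^\top(\lambda_t\B+\sigma^2\I_p)\Q=\diag(a_{t,\cdot})
\]
is diagonal, so the log-determinant is $\sum_k\log a_{t,k}$ and the trace $\tr\{\D_t\diag(a_{t,\cdot})^{-1}\}=\sum_k s_{t,k}/a_{t,k}$ involves only the diagonal entries of $\D_t$. The derivative computation itself is routine.
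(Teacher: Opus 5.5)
Your proposal is correct and follows essentially the same route as the paper's proof. Both apply the scalar derivative of $\log a + s/a$ and the chain rule through the affine map $a_{t,k}=\lambda_t b_k+\sigma^2$, then rewrite $\frac{1}{a}-\frac{s}{a^2}=-a^{-2}(s-a)$ to obtain the weighted normal equations. Your extra checks — the diagonal reduction of $\Phi$, writing out the weighted least-squares criterion explicitly, and spelling out what ``interior'' means — are correct, and they make explicit steps the paper leaves implicit.
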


\begin{proof}[Proof of Proposition~1]
By standard scalar calculus, the derivative of the core term with respect to $a$ is
\[
\frac{d}{da}\left(\log a+\frac{s}{a}\right)=\frac{1}{a}-\frac{s}{a^2}.
\]
Applying the chain rule with the partial derivatives of the linear formulation $a_{t,k} = \lambda_t b_k + \sigma^2$:
\[
\frac{\partial a_{t,k}}{\partial b_k}=\lambda_t,
\qquad
\frac{\partial a_{t,k}}{\partial \sigma^2}=1,
\]
directly yields the stated gradients. By factoring out $1/a_{t,k}^2$, we algebraically rearrange the terms inside the summation:
\[
\frac{1}{a_{t,k}}-\frac{s_{t,k}}{a_{t,k}^2}
=
\frac{a_{t,k}-s_{t,k}}{a_{t,k}^2}
=
-a_{t,k}^{-2}\big(s_{t,k}-a_{t,k}\big).
\]
Substituting $w_{t,k} = a_{t,k}^{-2}$ and $a_{t,k} = \lambda_t b_k + \sigma^2$ yields $-w_{t,k}\big(s_{t,k}-\lambda_t b_k-\sigma^2\big)$.
Setting the partial derivatives to zero and substituting this rearranged form produces exactly the weighted least squares normal equations, matching an iteratively reweighted least squares (IRLS) update step.
\end{proof}

For fixed $(\bb,\sigma^2)$, let
\[
\V_t=\diag\!\left(\frac{1}{a_{t,1}},\ldots,\frac{1}{a_{t,p}}\right),
\qquad t=2,\ldots,J,
\]
and define the objective function restricted to the Stiefel manifold of orthogonal matrices \citep{edelman1998geometry,absil2008optimization}:
\[
f(\Q)=\sum_{t=2}^J \tr\!\big(\Q^\top \bm S_t \Q \V_t\big),
\qquad \Q^\top\Q=\I_p.
\]

\begin{proposition}[Stiefel gradient]
\label{prop:stiefel}
The Euclidean gradient of $f$ is
\[
\nabla f(\Q)=2\sum_{t=2}^J \bm S_t \Q \V_t.
\]
Under the embedded Euclidean metric on the Stiefel manifold, the Riemannian gradient is
\[
\grad f(\Q)
=
\nabla f(\Q)-\Q\,\text{Sym}[\Q^\top \nabla f(\Q)].
\]
Consequently, a first-order stationary point satisfies
\[
\sum_{t=2}^J \bm S_t \Q \V_t = \Q\bm \Lambda
\]
for some symmetric matrix $\bm \Lambda=\bm \Lambda^\top$.
\end{proposition}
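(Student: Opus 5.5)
We compute the Euclidean gradient of $f(\Q)=\sum_{t=2}^J \tr(\Q^\top \bm S_t \Q \V_t)$ directly from the trace differential. Then we apply the standard projection formula for the Stiefel manifold under the embedded metric. Finally, we read off the stationarity condition by setting the Riemannian gradient to zero.

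\textbf{Step 1 (Euclidean gradient).} Perturb $\Q$ to $\Q+d\Q$. The first-order change in each summand is
\[
\tr(d\Q^\top \bm S_t \Q \V_t)+\tr(\Q^\top \bm S_t\, d\Q\, \V_t).
\]
Both $\bm S_t$ and $\V_t$ are symmetric: $\bm S_t$ is a sample covariance and $\V_t$ is diagonal. Transposing the second trace and using cyclicity therefore turns it into $\tr(d\Q^\top \bm S_t\Q\V_t)$. Hence $df=\tr\{d\Q^\top (2\sum_t \bm S_t\Q\V_t)\}$, which gives $\nabla f(\Q)=2\sum_{t=2}^J \bm S_t\Q\V_t$.

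\textbf{Step 2 (Riemannian gradient).} The tangent space at $\Q$ is $T_{\Q}=\{\bm\xi:\Q^\top\bm\xi+\bm\xi^\top\Q=0\}$. Under the embedded Euclidean metric, the Riemannian gradient is the orthogonal projection of $\nabla f$ onto $T_{\Q}$. For $\Q^\top\Q=\I_p$, the projection of any $\bm Z$ is $P_{\Q}(\bm Z)=\bm Z-\Q\,\text{Sym}(\Q^\top\bm Z)$, following \citet{edelman1998geometry} and \citet{absil2008optimization}. We check this formula in two parts.
\begin{itemize}
\item[(a)] \emph{The image is tangent.} We have $\Q^\top P_{\Q}(\bm Z)=\Q^\top\bm Z-\text{Sym}(\Q^\top\bm Z)=\text{Skew}(\Q^\top\bm Z)$, which is skew-symmetric, so $P_{\Q}(\bm Z)\in T_{\Q}$.
\item[(b)] \emph{The residual is normal.} The residual is $\Q\,\text{Sym}(\cdot)$. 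For any tangent $\bm\xi$, $\tr\{(\Q\bm S)^\top\bm\xi\}=\tr(\bm S\,\Q^\top\bm\xi)=0$, because a symmetric matrix $\bm S$ is Frobenius-orthogonal to the skew matrix $\Q^\top\bm\xi$.
\end{itemize}
Applying $P_{\Q}$ to $\nabla f$ gives the stated formula for $\grad f(\Q)$.

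\textbf{Step 3 (stationarity).} Setting $\grad f(\Q)=0$ gives
\[
\nabla f(\Q)=\Q\,\text{Sym}[\Q^\top\nabla f(\Q)].
\]
Dividing by $2$, this reads $\sum_t \bm S_t\Q\V_t=\Q\bm\Lambda$ with $\bm\Lambda=\tfrac12\text{Sym}[\Q^\top\nabla f(\Q)]$, which is symmetric.

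Conversely, suppose $\sum_t\bm S_t\Q\V_t=\Q\bm\Lambda$ with $\bm\Lambda$ symmetric. Then $\Q^\top\nabla f=2\bm\Lambda$ is already symmetric, so $\grad f=2\Q\bm\Lambda-\Q(2\bm\Lambda)=0$. The condition in the statement is therefore equivalent to first-order stationarity.

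\textbf{Main obstacle.} The only delicate point is justifying the projection formula in Step 2. This requires the normal space characterization $N_{\Q}=\{\Q\bm S:\bm S=\bm S^\top\}$, which holds because $\Q$ is square with $\Q^\top\Q=\I_p$. The checks in (a) and (b) verify it directly: tangency via the skew part, and orthogonality via $\langle\text{sym},\text{skew}\rangle=0$.
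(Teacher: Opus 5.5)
Your proposal is correct and follows essentially the same route as the paper: the same trace-differential computation giving $\nabla f(\Q)=2\sum_{t=2}^J \bm S_t\Q\V_t$, the projection $\nabla f-\Q\,\text{Sym}(\Q^\top\nabla f)$, and the choice $\bm\Lambda=\tfrac12\text{Sym}[\Q^\top\nabla f(\Q)]$. The paper only asserts the projection formula and omits the converse; you add an explicit tangency and normality check of that formula, which uses only $\Q^\top\Q=\I_p$ and not that $\Q$ is square, and you also prove the converse direction.
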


\begin{proof}[Proof of Proposition~2]
We compute the Euclidean gradient by considering the matrix differential of the trace. Using the product rule for a single component $t$:
\[
df_t(\Q) = \tr\!\big(d\Q^\top \bm S_t \Q \V_t + \Q^\top \bm S_t d\Q \V_t\big).
\]
Because both $\bm S_t$ and $\V_t$ are symmetric matrices and the trace is invariant under transpose and cyclic permutations, the two differential terms are identical:
\[
\tr\!\big(d\Q^\top \bm S_t \Q \V_t\big) = \tr\!\big\{(\bm S_t \Q \V_t)^\top d\Q\big\},
\]
and
\[
\tr\!\big(\Q^\top \bm S_t d\Q \V_t\big) = \tr\!\big(\V_t \Q^\top \bm S_t d\Q\big) = \tr\!\big[(\bm S_t \Q \V_t)^\top d\Q\big].
\]
Thus, the differential is $df_t(\Q) = \tr\!\big[(2 \bm S_t \Q \V_t)^\top d\Q \big]$. Summing over $t$ and matching with the definition of the Euclidean gradient $df = \tr\big[\nabla f(\Q)^\top d\Q\big]$ yields $\nabla f(\Q)=2\sum_{t=2}^J \bm S_t \Q \V_t$.

The Stiefel manifold is the set of orthogonal matrices $\mathcal{M} = \{\Q \in \mathbb{R}^{p \times p} \mid \Q^\top \Q = I_p\}$. The Riemannian gradient is obtained by orthogonally projecting the Euclidean gradient onto the tangent space of this manifold. Under the embedded Euclidean metric, this projection is given by subtracting the normal component:
\[
\grad f(\Q)
=
\nabla f(\Q)-\Q\,\text{Sym}[\Q^\top \nabla f(\Q)],
\]
where $\text{Sym}(\bm X) = \frac{1}{2}(\bm X + \bm X^\top)$. Setting $\grad f(\Q)=0$ indicates a stationary point on the manifold, giving
\[
\nabla f(\Q)=\Q\,\text{Sym}[\Q^\top \nabla f(\Q)].
\]
Defining the explicitly symmetric matrix $\bm \Lambda = \frac{1}{2}\text{Sym}[\Q^\top \nabla f(\Q)]$, this is equivalent to $\nabla f(\Q) = \Q(2\bm \Lambda)$, which naturally reduces to the stated stationarity condition $\sum_{t=2}^J \bm S_t \Q \V_t = \Q\bm \Lambda$.
\end{proof}

\subsection{Web Algorithm 1: Initializer 1 under known shared temporal covariance and Assumption 1}
\label{supp:algorithms}

\begin{algorithm}[H]
\caption{Initializer 1: block-coordinate surrogate MLE under Assumption~1}
\label{alg:init1}
\begin{algorithmic}[1]
\State \textbf{Input:} rotated covariances $\{\bm S_t\}_{t=1}^J$, temporal eigenvalues $\{\lambda_t\}_{t=1}^J$, ranks $(d_1,d_2)$.
\State \textbf{Initialize:} $\Q^{(0)}\leftarrow$ leading eigenvectors of $\overline{\bm S}_c$; $\bb^{(0)}\ge 0$ while explicitly restricting $b_k^{(0)} = 0$ for $k > d_2$; $\sigma^{2(0)}>\varepsilon$.
\For{$m=0,1,2,\ldots$ until convergence}
    \State \textbf{$(\bb,\sigma^2)$-step:} with $\Q=\Q^{(m)}$, update active components $(b_1, \ldots, b_{d_2})$ and $\sigma^2$ by projected IRLS using the score equations in Proposition~\ref{prop:score-bsigma}; accept the step by backtracking if needed to decrease $\Phi(\bb,\sigma^2;\Q)$. Keep $b_k = 0$ for $k > d_2$ to strictly enforce the rank constraint $\rank(\B) \le d_2$.
    \State \textbf{$\Q$-step:} with $(\bb,\sigma^2)$ fixed, take a Riemannian gradient step for $f(\Q)$ from Proposition~\ref{prop:stiefel}, followed by polar retraction and Armijo backtracking.
    \State \textbf{Profile $\A$:} set $\B\leftarrow \Q\diag(\bb)\Q^\top$ and update
    \[
    \A\leftarrow
    \frac{1}{J}\bm H^{1/2}
    \Pi_+^{(d_1)}
    \!\left(
    \bm H^{-1/2}\bm S_1\bm H^{-1/2}-\I_p
    \right)
    \bm H^{1/2},
    \qquad
    \bm H=\lambda_1\B+\sigma^2 \I_p.
    \]
\EndFor
\State \textbf{Return:} $(\A,\B,\sigma^2)$.
\end{algorithmic}
\end{algorithm}

\subsection{Compound symmetry}
\label{supp:cs}

\begin{proof}[Proof of Corollary~1]
For the compound symmetry temporal covariance matrix
\[
\bm{\Sigma}(\tau^2)=(1-\tau^2)I_J+\tau^2\one_J\one_J^\top,
\qquad 0\le \tau^2<1,
\]
the all-ones vector $\one_J$ is an eigenvector. Evaluating $\bm{\Sigma}(\tau^2)\one_J$ yields
\[
\bm{\Sigma}(\tau^2)\one_J = (1-\tau^2)\one_J + \tau^2(\one_J^\top \one_J)\one_J = \big(1+(J-1)\tau^2\big)\one_J.
\]
Thus, the principal eigenvalue is $\lambda_1(\tau^2)=1+(J-1)\tau^2$. Any vector $\bm{v}$ orthogonal to $\one_J$ (the contrast directions) sits in the null space of $\one_J\one_J^\top$. Thus,
\[
\bm{\Sigma}(\tau^2)\bm{v} = (1-\tau^2)\bm{v} + \tau^2\one_J(\one_J^\top \bm{v}) = (1-\tau^2)\bm{v}.
\]
Hence, every contrast direction has the exact same common eigenvalue
\[
\lambda_t(\tau^2)\equiv \lambda_c(\tau^2)=1-\tau^2, \qquad t=2,\ldots,J.
\]

Because the parameters $\lambda_c, \B$ and $\sigma^2$ are identical across all $t \ge 2$, we can factor them out of the sum. Letting $\overline{\bm S}_c = \frac{1}{J-1}\sum_{t=2}^J \bm S_t$ be the pooled contrast sample covariance, the contrast likelihood $\mathcal L_B$ becomes:
\begin{align*}
\mathcal L_B
&=
\frac{n}{2}\sum_{t=2}^J
\left\{
\log\det(\lambda_c\B+\sigma^2 I_p)
+
\tr\!\big[\bm S_t(\lambda_c\B+\sigma^2 \I_p)^{-1}\big]
\right\}\\
&=
\frac{n(J-1)}{2}
\left\{
\log\det(\lambda_c\B+\sigma^2 I_p)
+
\tr\!\big[\overline{\bm S}_c(\lambda_c\B+\sigma^2 \I_p)^{-1}\big]
\right\}.
\end{align*}

By defining a scaled loading matrix $\bm W_c = \sqrt{\lambda_c} \bm W_2$, the model contrast covariance becomes exactly $\bm W_c \bm W_c^\top + \sigma^2 I_p$. Therefore, minimizing $\mathcal L_B$ over $\bm W_c$ and $\sigma^2$ is mathematically identical to fitting a standard PPCA model \citep{tipping1999probabilistic} to the target sample covariance $\overline{\bm S}_c$ with latent dimensionality $d_2$.

Let the eigendecomposition of the pooled contrast sample covariance be
\[
\overline{\bm S}_c=\Q\diag(\bar s_1,\ldots,\bar s_p)\Q^\top,
\qquad
\bar s_1\ge\cdots\ge\bar s_p.
\]
By the global maximum likelihood estimators derived by \citet{tipping1999probabilistic}, assuming $d_2 < p$, the optimal noise variance $\widehat{\sigma}^2$ is the arithmetic average of the discarded data eigenvalues:
\[
\widehat{\sigma}^2=\frac{1}{p-d_2}\sum_{k>d_2}\bar s_k.
\]
Furthermore, the maximum likelihood estimator for the loading matrix is
\[
\widehat{\bm W}_c = \Q_{d_2} (\bm \Lambda_{d_2} - \widehat{\sigma}^2 \I_{d_2})_+^{1/2} \bm R,
\]
where $\Q_{d_2}$ contains the leading $d_2$ eigenvectors of $\overline{\bm S}_c$, $\bm \Lambda_{d_2} = \diag(\bar s_1, \ldots, \bar s_{d_2})$, $\bm R$ is an arbitrary $d_2 \times d_2$ orthogonal rotation matrix.

Returning to our original parameterization, the estimated signal covariance is $\lambda_c\widehat{\B} = \widehat{\bm W}_c\widehat{\bm W}_c^\top$. The arbitrary rotation $\bm R$ mathematically cancels out ($\bm R\bm R^\top = \I_{d_2}$), yielding a unique global solution:
\[
\lambda_c(\tau^2)\widehat{\B} = \Q_{d_2} (\bm \Lambda_{d_2} - \widehat{\sigma}^2 \I_{d_2})_+ \Q_{d_2}^\top.
\]
Dividing both sides by the scalar $\lambda_c(\tau^2) = 1-\tau^2 > 0$ isolates the optimal estimate for $\B$. Expressed via its diagonal eigenvalues $b_k$ under the basis $\Q$, we explicitly obtain $\widehat{\B}=\Q\diag(\widehat b_1,\ldots,\widehat b_p)\Q^\top$ with:
\[
\widehat b_k=
\begin{cases}
\max\!\left\{
\frac{\bar s_k-\widehat{\sigma}^2}{\lambda_c(\tau^2)},\,0
\right\}, & k=1,\ldots,d_2,\\[2mm]
0, & k=d_2+1,\ldots,p.
\end{cases}
\]
The closed-form expression for $\A^\star$ then follows directly by substituting these PPCA estimates into Theorem~1.
\end{proof}

\paragraph{Constant Profiled Contrast Likelihood}
Under Corollary~1, the estimated total contrast covariance is:
\[
\lambda_c(\tau^2)\widehat{\B}(\tau^2)+\widehat{\sigma}^2 \I_p = \Q\diag\!\big(\max(\bar s_1, \widehat{\sigma}^2), \ldots, \max(\bar s_{d_2}, \widehat{\sigma}^2), \widehat{\sigma}^2, \ldots, \widehat{\sigma}^2\big)\Q^\top.
\]
Because this optimal matrix relies solely on the data eigenvalues $\bar s_k$ and the rank $d_2$, the parameters $\lambda_c$ and $\tau^2$ completely cancel out. Consequently, the profiled contrast likelihood $\mathcal L_B$ is perfectly constant with respect to $\tau^2$ after substituting the PPCA solution. Thus, the one-dimensional profile search for $\tau^2$ can safely ignore $\mathcal L_B$ and merely minimize the mean-direction objective $\mathcal L_A\!\big(\A^\star(\tau^2)\mid \widehat{\B}(\tau^2),\widehat{\sigma}^2\big)$.

\subsection{Web Algorithm 2: Initializer 2 under Compound Symmetry}

\begin{algorithm}[H]
\caption{Initializer 2: compound-symmetry surrogate initializer}
\label{alg:init2}
\begin{algorithmic}[1]
\State \textbf{Input:} rotated covariances $\bm S_1,\overline{\bm S}_c$, ranks $(d_1,d_2)$, number of visits $J$, and boundary tolerance $\epsilon_\tau>0$.
\State Compute the eigendecomposition
\[
\overline{\bm S}_c=\Q\diag(\bar s_1,\ldots,\bar s_p)\Q^\top,
\qquad
\bar s_1\ge\cdots\ge\bar s_p.
\]
\State Set
\[
\widehat{\sigma}^2=\frac{1}{p-d_2}\sum_{k>d_2}\bar s_k.
\]
\For{each candidate $\tau^2\in[0,1-\epsilon_\tau]$ in a bounded one-dimensional search}
    \State Compute
    \[
    \lambda_1(\tau^2)=1+(J-1)\tau^2,
    \qquad
    \lambda_c(\tau^2)=1-\tau^2.
    \]
    \State Set
    \[
    \widehat b_k(\tau^2)=
    \begin{cases}
    \max\!\left\{
    \frac{\bar s_k-\widehat{\sigma}^2}{\lambda_c(\tau^2)},\,0
    \right\}, & k=1,\ldots,d_2,\\[2mm]
    0, & k=d_2+1,\ldots,p.
    \end{cases}
    \]
    \State Form matrix $\widehat{\B}(\tau^2)=\Q\diag(\widehat b_1(\tau^2),\ldots,\widehat b_p(\tau^2))\Q^\top.$
    \State Update
    \[
    \A^\star(\tau^2)
    =
    \frac{1}{J}\bm H(\tau^2)^{1/2}
    \Pi_+^{(d_1)}
    \!\left(
    \bm H(\tau^2)^{-1/2}\bm S_1\bm H(\tau^2)^{-1/2}-I_p
    \right)
    \bm H(\tau^2)^{1/2},
    \]
    where
    \[
    \bm H(\tau^2)=\lambda_1(\tau^2)\widehat{\B}(\tau^2)+\widehat{\sigma}^2 I_p.
    \]
    \State Evaluate the profiled mean-direction objective
    \[
    \mathcal L_A\!\big(\A^\star(\tau^2)\mid \widehat{\B}(\tau^2),\widehat{\sigma}^2\big).
    \]
\EndFor
\State Return the value of $\tau^2$ minimizing the profiled objective and the corresponding
$\big(\A^\star,\widehat{\B},\widehat{\sigma}^2\big)$.
\end{algorithmic}
\end{algorithm}

\webappendixsection{Additional Simulation Details}

\label{webapp:sim_details}

This Web Appendix provides extended results for the simulation studies presented in Section~3 of the main text. Specifically, we present the robustness of the proposed HPPCA framework under a discrete-time LDS with a longer panel in \Cref{webapp:sec:ar1_j10}. We extend the misspecification analysis to a second-order autoregressive (AR(2)) generating process in \Cref{webapp:sec:ar2}. Finally, \Cref{webapp:sec:efficiency} provides comprehensive numerical results evaluating the choice of initializers.

\subsection{Robustness to Misspecification: AR(1) Dynamics with Longer Panels}
\label{webapp:sec:ar1_j10}

In Section~3.3, we evaluated the robustness of HPPCA when the true underlying dynamics follow a discrete-time AR(1) process, rather than the assumed continuous-time Gaussian process. The main text demonstrated these results for a panel with $J=5$ visits and a latent dimensionality of $d=4$. Here, we present the corresponding results for a longer panel ($J=10$) and a higher latent dimensionality ($d=8$, partitioned into $d_1=4$ subject-level static traits and $d_2=4$ dynamic traits). The generative mechanism remains identical to that described in Section~3.3. \Cref{fig:sim_ar1_j10} displays the imputation MSE for the $J=10$ scenario. The empirical trends mirror those observed in the $J=5$ setting, but the performance gaps between HPPCA and the baseline methods become more pronounced. 

\begin{figure}[!htbp]
    \centering
    \includegraphics[width=0.9\linewidth]{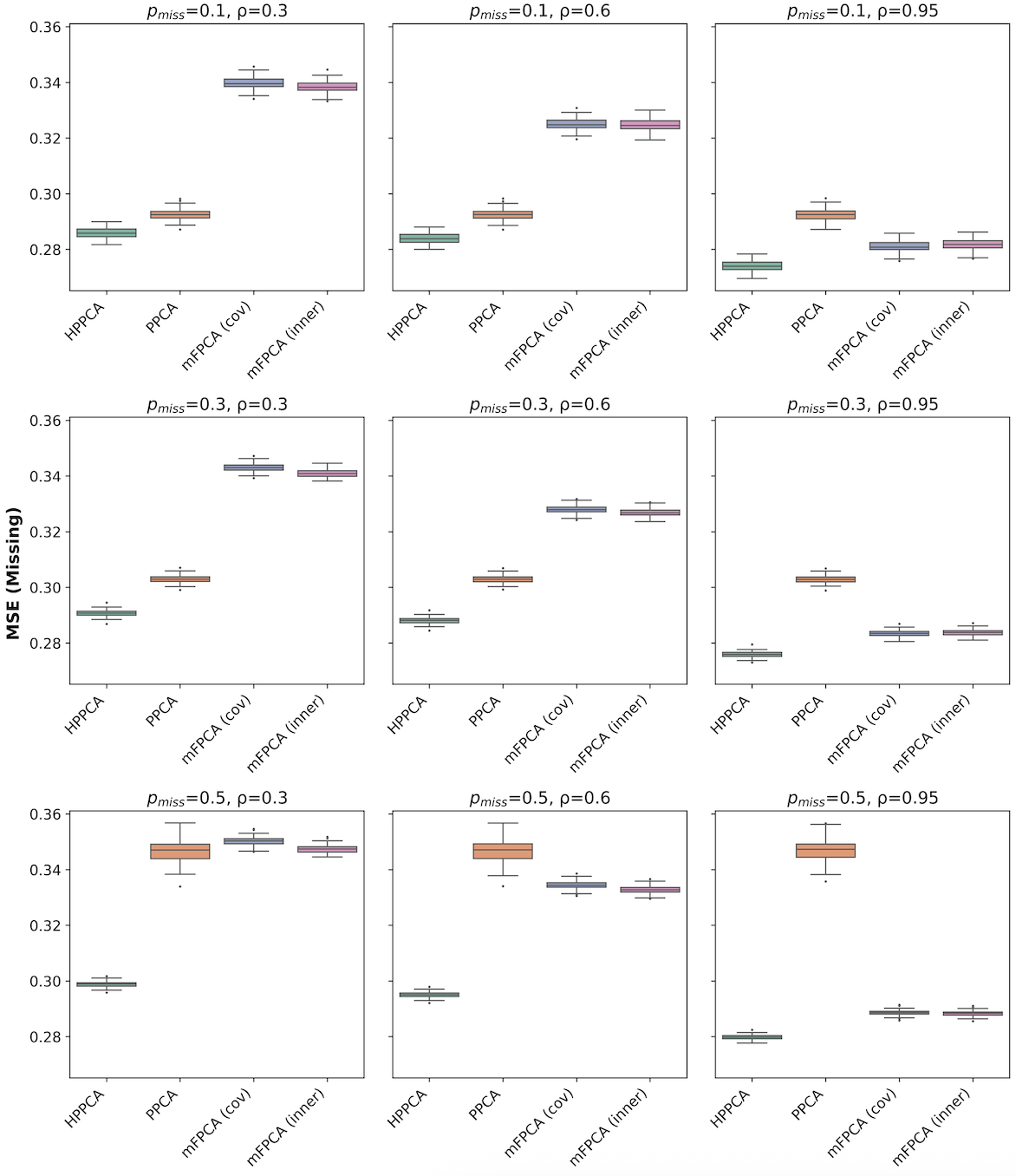}
    \caption{\textbf{Predictive performance under the discrete LDS generative mechanism ($J=10, d=8$).} Boxplots display the imputation MSE for HPPCA and the baseline methods (PPCA, mFPCA(cov), mFPCA(inner)). The grid varies the missingness probability $p_{\text{miss}} \in \{0.1, 0.3, 0.5\}$ (rows) and the temporal autocorrelation parameter $\rho \in \{0.3, 0.6, 0.95\}$ (columns).}
    \label{fig:sim_ar1_j10}
\end{figure}

\subsection{Robustness to Misspecification: AR(2) Generative Process}
\label{webapp:sec:ar2}

To further evaluate the robustness of HPPCA against higher-order discrete state transitions, we modified the LDS generative mechanism to follow a second-order autoregressive (AR(2)) process. An AR(2) process can exhibit more complex pseudo-cyclical or delayed momentum dynamics compared to an AR(1) process, deviating further from the continuous-time GP assumptions. Meanwhile, we did not partition the latent space here as in \Cref{webapp:sec:ar1_j10} to further illustrate the method's generalizability.

For subject $i$ at time step $j$, the latent factors were generated as:
\[
\bm{Z}_{ij} = \bm{\Phi}_1 \bm{Z}_{i(j-1)} + \bm{\Phi}_2 \bm{Z}_{i(j-2)} + \bm{w}_{ij},
\]
where the transition matrices were defined as diagonal matrices $\bm{\Phi}_1 = \phi_{1} \bm{I}_{d}$ and $\bm{\Phi}_2 = \phi_{2} \bm{I}_{d}$, where $\phi_1 = 0.8, \phi_2 = 0.1$ \citep{shen2024principal}. The process noise $\bm{w}_{ij} \sim \mathcal{N}(\bm{0}, \sigma_w^2\bm{I}_{d})$, where $\sigma_w^2$ is the variance of the latent noise, was analytically scaled by solving the discrete Yule-Walker equations to maintain marginal unit variance $\bm{Z}_{ij} \sim \mathcal{N}(\bm{0}, \bm{I}_d)$ across all time steps, preserving parameter identifiability. Finally, these latent states were mapped to the high-dimensional observation space following the model in Section~3.3: $\bm{Y}_{ij} = \bm{W}\bm{Z}_{ij} + \sigma \bm{\varepsilon}_{ij}, \bm{\varepsilon}_{ij} \sim N(\bm{0}, \bm{I}_p)$, where the true entries in loadings $\bm{W} \in \mathbb{R}^{p \times d}$ were generated from a standard normal distribution, and $\sigma^2$ was fixed at 0.25. Similarly, all observation times were uniformly scaled to the interval $[0, 1]$ and evenly distributed. 

\Cref{fig:sim_ar2} illustrates the imputation MSE under the AR(2) generative model. Consistent with the AR(1) results, HPPCA consistently yields the lowest imputation MSE compared to PPCA, mFPCA(cov) and mFPCA(inner), especially when the missing rate is high (50\%) and the time panel is long ($J = 10$). This demonstrates that the framework's reliance on continuous-time GP priors acts as a highly effective and robust regularizer. HPPCA successfully accommodates the higher-order discrete dynamics.

\begin{figure}[!htbp]
    \centering
    \includegraphics[width=0.6\linewidth]{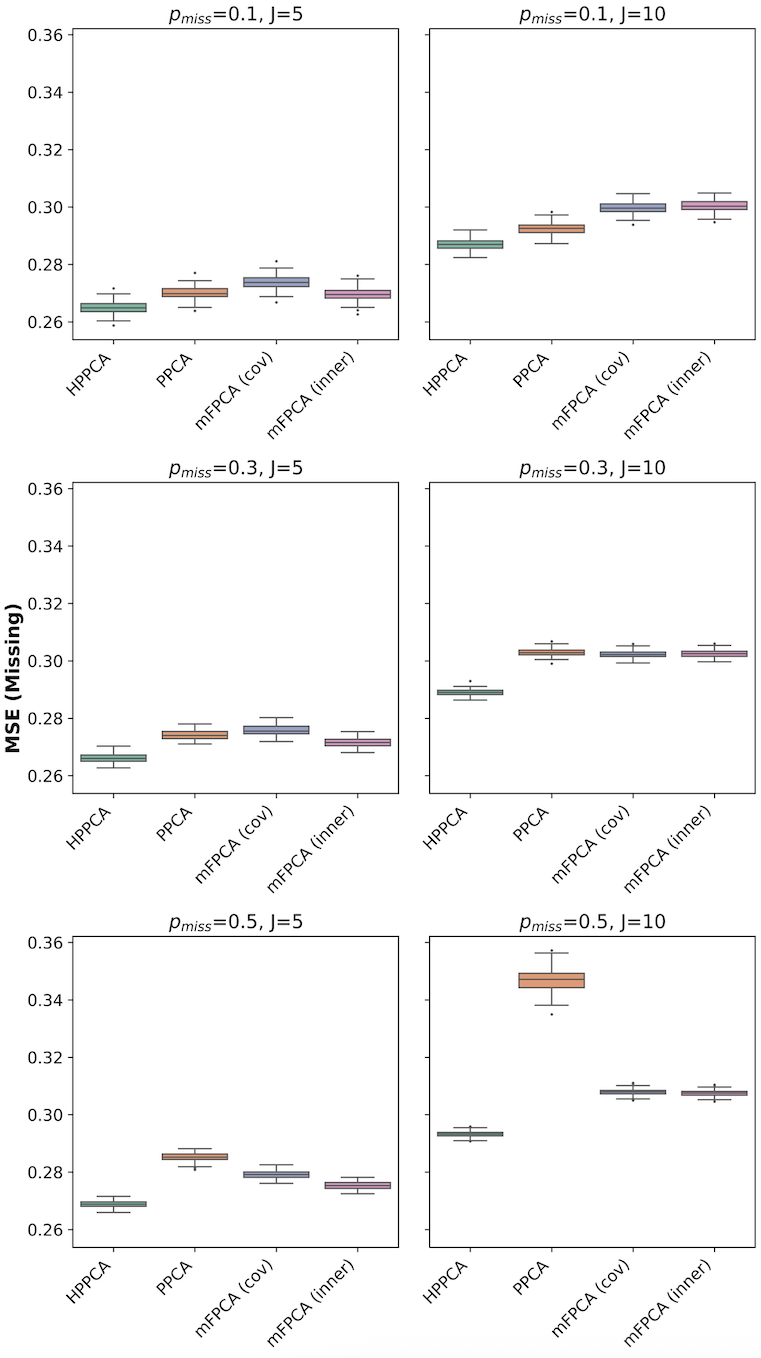}
    \caption{\textbf{Predictive performance under an AR(2) generative mechanism.} Boxplots display the imputation MSE for HPPCA and the baseline methods (PPCA, mFPCA(cov), mFPCA(inner)). The grid varies the missingness probability $p_{\text{miss}} \in \{0.1, 0.3, 0.5\}$ (rows) and the number of visits $J \in \{5, 10 \}$ (columns).}
    \label{fig:sim_ar2}
\end{figure}

\subsection{Computational Efficiency of the Initializers}
\label{webapp:sec:efficiency}

Observational data frequently suffer from item-level missingness, varying numbers of visits, and smooth, non-uniform temporal dynamics (e.g., an RBF covariance structure). To apply the initializers in practice, we implemented a computationally efficient empirical pre-processing procedure. 

To start the initializer, any missing data is first filled using participant-specific feature mean imputation; if that is not available, the gap is filled using the global feature mean. For Initializer 1, we estimate a rough temporal covariance from pairwise observed-data differences. To estimate a known temporal covariance strictly satisfying Assumption 1, we project this empirical matrix onto the valid parameter space such that its principal eigenvector is exactly parallel to the mean direction $\one_J$, while maintaining symmetry, positive semi-definiteness, and a unit diagonal. Initializer 2 completely bypasses this projection step, since we just assume the temporal covariance matrix follow the compound symmetry structure.

\Cref{tab:init_5_2_2,tab:init_10_4_4} demonstrate the substantial computational benefits provided by these initializers over random initialization. These results confirm that both Initializer 1 and 2 robustly overcome complex data challenges to provide computationally cheap starting points for the HPPCA framework.

\begin{table}[!htbp]
  \centering
  \caption{\textbf{Convergence Efficiency and Parameter Proximity with Different Intialization Methods ($J=5, d_1=2, d_2=2$).} Comparison of the proposed initializers against random initialization across varying missing data rates. Results are based on 100 independent replicates where data were generated using an RBF kernel ($\ell=10$) under the HPPCA model as in Section 3.1. We report the mean (standard deviation) for the principal angles between the estimated and true subspaces for $\bm W_1$ and $\bm W_2$ at both initialization and convergence. Final estimated bias (standard deviation) for $\sigma^2$ and $\ell$ are provided alongside computational cost, measured by total EM iterations and wall-clock time (minutes).}
  \label{tab:init_5_2_2}
  \vspace{0.2cm}
  \resizebox{\textwidth}{!}{
    \begin{tabular}{llcccccccc}
    \toprule
    \textbf{Missing} & \textbf{Init} & \textbf{Init $\bm W_1$} & \textbf{Final $\bm W_1$} & \textbf{Init $\bm W_2$} & \textbf{Final $\bm W_2$} & \textbf{Final $\sigma^2$} & \textbf{Final $\ell$} & \textbf{Final} & \textbf{Time} \\
    \textbf{Rate} & \textbf{Method} & \textbf{Angle ($^\circ$)} & \textbf{Angle ($^\circ$)} & \textbf{Angle ($^\circ$)} & \textbf{Angle ($^\circ$)} & \textbf{Estimate} & \textbf{Estimate} & \textbf{Iters} & \textbf{(min)} \\
    \midrule
    \multirow{3}{*}{$0\%$}  
      & Random & 85.74 (3.09) & 2.03 (0.80) & 85.66 (3.54) & 0.47 (0.06) & 0.00 (0.00) & 0.01 (0.08) & 5459 (1011) & 32.6 (12.9) \\
    & Initializer 1 & 2.10 (0.78) & 2.03 (0.80) & 0.61 (0.07) & 0.47 (0.06) & 0.00 (0.00) & 0.01 (0.08) & 2760 (869)  & 16.8 (8.5)  \\
    & Initializer 2 & 2.10 (0.78) & 2.03 (0.80) & 2.63 (0.22) & 0.47 (0.06) & 0.00 (0.00) & 0.01 (0.08) & 2748 (870)  & 16.1 (8.2)  \\
    \cmidrule{1-10}
    \multirow{3}{*}{$10\%$}  
      & Random & 85.74 (3.09) & 2.03 (0.80) & 85.66 (3.54) & 0.50 (0.07) & 0.00 (0.00) & 0.01 (0.08) & 5496 (1245) & 38.2 (16.8) \\
    & Initializer 1 & 2.15 (0.77) & 2.03 (0.80) & 1.11 (0.14) & 0.50 (0.07) & 0.00 (0.00) & 0.01 (0.08) & 2839 (985)  & 20.6 (10.5) \\
    & Initializer 2 & 2.15 (0.77) & 2.03 (0.80) & 4.12 (0.24) & 0.50 (0.07) & 0.00 (0.00) & 0.01 (0.08) & 2844 (985)  & 21.1 (11.2) \\
    \cmidrule{1-10}
    \multirow{3}{*}{$30\%$} 
      & Random & 85.74 (3.09) & 2.06 (0.80) & 85.66 (3.54) & 0.57 (0.08) & 0.00 (0.00) & 0.01 (0.08) & 5348 (803)  & 40.2 (16.9) \\
    & Initializer 1 & 2.34 (0.76) & 2.06 (0.80) & 2.74 (0.58) & 0.57 (0.08) & 0.00 (0.00) & 0.01 (0.08) & 2748 (654)  & 21.0 (10.0) \\
    & Initializer 2 & 2.34 (0.77) & 2.06 (0.80) & 8.77 (0.33) & 0.57 (0.08) & 0.00 (0.00) & 0.01 (0.08) & 2756 (663)  & 20.5 (9.0)  \\
    \cmidrule{1-10}
    \multirow{3}{*}{$50\%$} 
      & Random & 85.74 (3.09) & 2.09 (0.78) & 85.66 (3.54) & 0.68 (0.09) & -0.00 (0.00) & 0.01 (0.08) & 5550 (1044) & 44.0 (20.7) \\
    & Initializer 1 & 2.83 (0.80) & 2.09 (0.78) & 6.48 (1.90) & 0.68 (0.09) & -0.00 (0.00) & 0.01 (0.08) & 3772 (937)  & 29.7 (14.9) \\
    & Initializer 2 & 2.83 (0.80) & 2.09 (0.78) & 18.30 (0.61) & 0.68 (0.09) & -0.00 (0.00) & 0.01 (0.08) & 3798 (927)  & 29.4 (14.3) \\
    \bottomrule
    \end{tabular}%
  }
\end{table}

\begin{table}[!htbp]
  \centering
  \caption{\textbf{Convergence Efficiency and Parameter Proximity with Different Intialization Methods ($J=10, d_1=4, d_2=4$).} Comparison of the proposed initializers against random initialization across varying missing data rates. Results are based on 100 independent replicates where data were generated using an RBF kernel ($\ell=10$) under the HPPCA model as in Section 3.1. We report the mean (standard deviation) for the principal angles between the estimated and true subspaces for $\bm W_1$ and $\bm W_2$ at both initialization and convergence. Final estimated bias (standard deviation) for $\sigma^2$ and $\ell$ are provided alongside computational cost, measured by total EM iterations and wall-clock time (minutes).}
  \label{tab:init_10_4_4}
  \vspace{0.2cm}
  \resizebox{\textwidth}{!}{
    \begin{tabular}{llcccccccc}
    \toprule
    \textbf{Missing} & \textbf{Init} & \textbf{Init $\bm W_1$} & \textbf{Final $\bm W_1$} & \textbf{Init $\bm W_2$} & \textbf{Final $\bm W_2$} & \textbf{Final $\sigma^2$} & \textbf{Final $\ell$} & \textbf{Final} & \textbf{Time} \\
    \textbf{Rate} & \textbf{Method} & \textbf{Angle ($^\circ$)} & \textbf{Angle ($^\circ$)} & \textbf{Angle ($^\circ$)} & \textbf{Angle ($^\circ$)} & \textbf{Estimate} & \textbf{Estimate} & \textbf{Iters} & \textbf{(min)} \\
    \midrule
    \multirow{3}{*}{$0\%$}  
      & Random & 87.21 (2.20) & 2.71 (0.57) & 86.87 (2.37) & 0.38 (0.04) & 0.00 (0.00) & 0.00 (0.03) & 11048 (2449) & 141.4 (51.9) \\
    & Initializer 1 & 18.94 (33.49) & 2.71 (0.57) & 16.96 (34.30) & 0.38 (0.04) & 0.00 (0.00) & 0.00 (0.03) & 5520 (3218)  & 68.0 (45.0)  \\
    & Initializer 2 & 2.78 (0.64) & 2.71 (0.57) & 0.43 (0.05) & 0.38 (0.04) & 0.00 (0.00) & 0.00 (0.03) & 4236 (1622)  & 52.4 (26.5)  \\
    \cmidrule{1-10}
    \multirow{3}{*}{$10\%$}  
      & Random & 87.21 (2.20) & 2.72 (0.57) & 86.87 (2.37) & 0.40 (0.05) & 0.00 (0.00) & 0.00 (0.03) & 10949 (2466) & 175.0 (64.6) \\
    & Initializer 1 & 12.87 (27.27) & 2.72 (0.57) & 11.30 (28.00) & 0.40 (0.05) & 0.00 (0.00) & 0.00 (0.03) & 5040 (2826)  & 77.0 (54.8)  \\
    & Initializer 2 & 2.82 (0.65) & 2.72 (0.57) & 1.01 (0.16) & 0.40 (0.05) & 0.00 (0.00) & 0.00 (0.03) & 4257 (1607)  & 69.3 (33.6)  \\
    \cmidrule{1-10}
    \multirow{3}{*}{$30\%$} 
      & Random & 87.21 (2.20) & 2.72 (0.57) & 86.87 (2.37) & 0.46 (0.05) & 0.00 (0.00) & 0.00 (0.03) & 11489 (2798) & 185.3 (81.5) \\
    & Initializer 1 & 4.60 (11.75) & 2.72 (0.57) & 4.60 (11.82) & 0.46 (0.05) & 0.00 (0.00) & 0.00 (0.03) & 5130 (2467)  & 83.3 (51.5)  \\
    & Initializer 2 & 2.94 (0.62) & 2.72 (0.57) & 3.19 (0.77) & 0.46 (0.05) & 0.00 (0.00) & 0.00 (0.03) & 5009 (2078)  & 81.0 (46.1)  \\
    \cmidrule{1-10}
    \multirow{3}{*}{$50\%$} 
      & Random & 87.21 (2.20) & 2.74 (0.57) & 86.87 (2.37) & 0.56 (0.06) & 0.00 (0.00) & 0.00 (0.03) & 11263 (2072) & 189.2 (68.7) \\
    & Initializer 1 & 9.87 (22.67) & 2.74 (0.57) & 13.03 (21.81) & 0.56 (0.06) & 0.00 (0.00) & 0.00 (0.03) & 5604 (2193)  & 87.9 (38.1)  \\
    & Initializer 2 & 3.26 (0.66) & 2.74 (0.57) & 17.83 (0.32) & 0.56 (0.06) & 0.00 (0.00) & 0.00 (0.03) & 5075 (1384)  & 83.7 (36.6)  \\
    \bottomrule
    \end{tabular}%
  }
\end{table}

\webappendixsection{Additional Details for the RECOVER Application}

This appendix provides supplementary information regarding the data preprocessing, exploratory data distributions, physical symptom definitions, model parameter estimation diagnostics, subspace orthogonality, and secondary downstream prediction metrics for the RECOVER adult cohort analysis presented in Section~4 of the main text.

\subsection{Cohort Characteristics}

A major challenge in analyzing the RECOVER cohort is the extreme irregularity and sparsity of the follow-up schedules. \Cref{fig:recover_distributions} illustrates the empirical distributions of the longitudinal data. The number of visits per participant varies widely from $1$ to $16$ (\Cref{fig:recover_distributions:visits}), and the timing of these visits spans from the acute infection phase to over $1{,}400$ days post-infection (\Cref{fig:recover_distributions:days}). This highlights the necessity of the continuous-time, subject-specific generative mechanism proposed in the HPPCA framework.

\begin{figure}[!htbp]
    \centering
    \begin{subfigure}{0.43\textwidth}
        \centering
        \includegraphics[width=\linewidth]{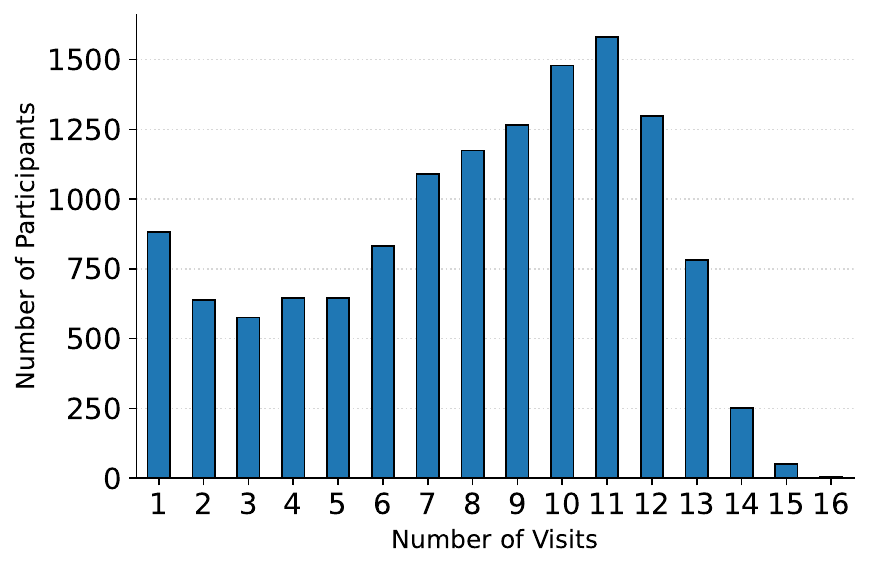}
        \caption{Distribution of Total Visits per Participant}
        \label{fig:recover_distributions:visits}
    \end{subfigure} \hfill
    \begin{subfigure}{0.47\textwidth}
        \centering
        \includegraphics[width=\linewidth]{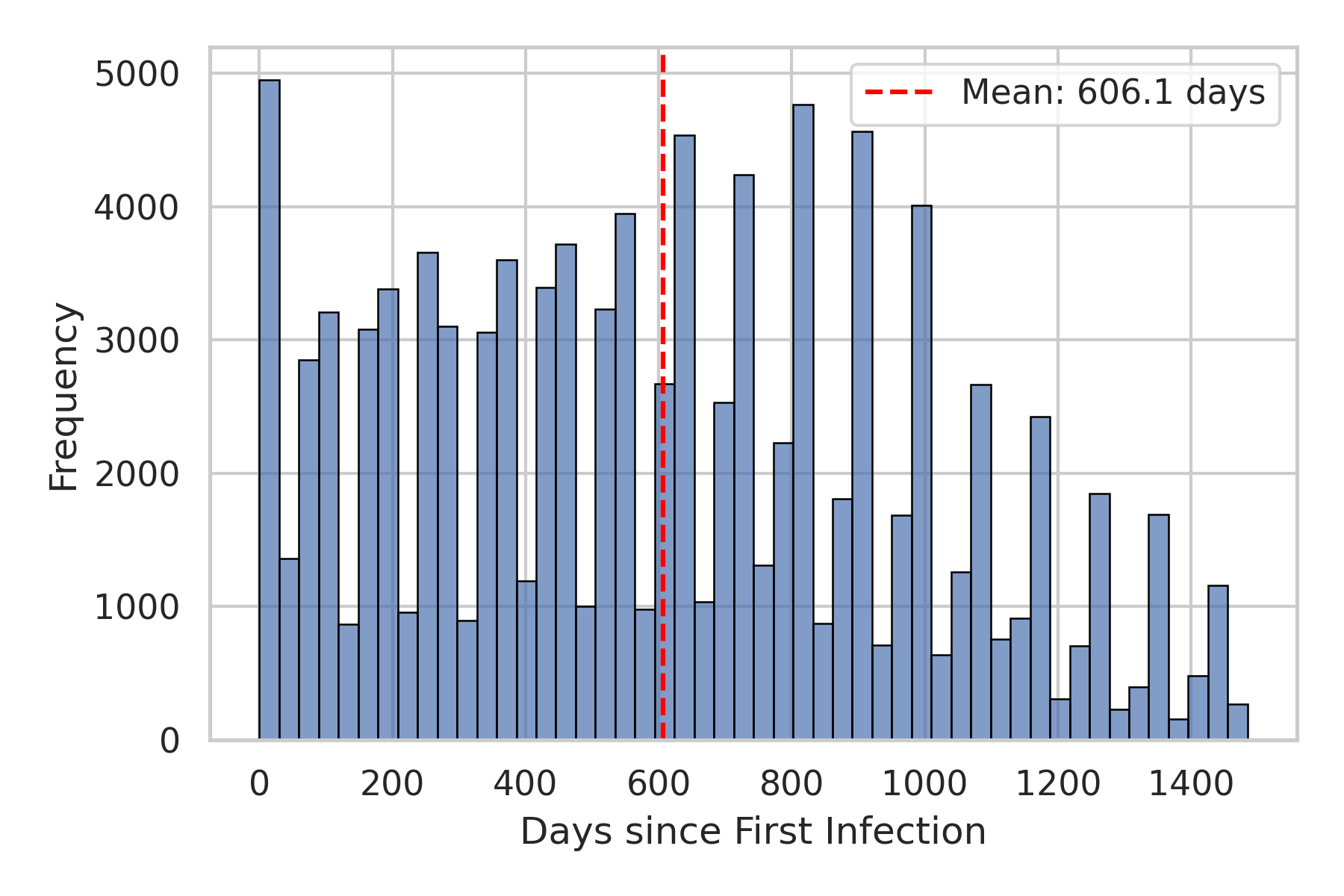}
        \caption{Distribution of Visit Days since First Infection}
        \label{fig:recover_distributions:days}
    \end{subfigure}
    \caption{\textbf{Distributions of Longitudinal Follow-ups in the RECOVER Cohort.} (a) Distribution of the total number of visits ($J_i$) per participant, highlighting the highly variable sequence lengths. (b) Distribution of observation times ($t_{ij}$), measured in days since the participant's first documented SARS-CoV-2 infection (Mean = $606.1$ days).}
    \label{fig:recover_distributions}
\end{figure}

\subsection{Physical Symptoms}
The original physical symptom questionnaire included 59 burden variables. To ensure a uniform feature space across all participants, we excluded two sex-specific burden variables (on menopause symptoms and menstrual cycle respectively), resulting in $p=57$ physical symptom features. \Cref{tab:symptom_dictionary} provides the mapping of the $p=57$ physical symptom burden variable labels utilized in the analysis, including their physiological categorization and the survey questions administered to participants. As mentioned in Section~4.1, we employed a deterministic imputation strategy for these variables based on survey logic. Specifically, missing burden scores were imputed as ``Not at all'' (score 1) if a participant answered ``No, I have NOT had this symptom'' to the corresponding gating question (``Please tell us at what time(s) you have had any of the following symptoms''). While this imputation generally followed a 1-to-1 mapping between a symptom and its burden, two categories rely on broader screening questions: neurologic variables were mapped to the umbrella symptom ``Nerve problems (tremor, shaking, abnormal movements, numbness, tingling, burning, can't move part of body, new seizures)'', and all pain variables (except headaches) were mapped to ``Pain in any part of your body''.

\footnotesize{
\begin{longtable}{p{0.15\textwidth} p{0.2\textwidth} p{0.57\textwidth}}
\caption{\textbf{Dictionary of Physical Symptom Variables in the RECOVER Cohort}} \label{tab:symptom_dictionary} \\
\toprule

\textbf{Category} & \textbf{Symptom Label} & \textbf{Survey Question} \\
\midrule
\endfirsthead

\multicolumn{3}{c}%
{{\bfseries Web \tablename\ \thetable{} -- continued from previous page}} \\
\toprule
\textbf{Category} & \textbf{Symptom Label} & \textbf{Survey Question} \\
\midrule
\endhead

\midrule \multicolumn{3}{r}{{Continued on next page}} \\
\endfoot

\bottomrule
\endlastfoot

Gastrointestinal & Being Full (Bloated) & How much does this feeling of being full (bloated) bother you? \\
Gastrointestinal & Constipation & How much does the constipation bother you? \\
Gastrointestinal & Abdominal Pain & How much does this cramping or colicky abdominal pain bother you? \\
Gastrointestinal & Diarrhea & How much does the diarrhea bother you? \\
Gastrointestinal & Belly Symptoms & How much do your belly symptoms bother you? \\
Gastrointestinal & Nausea & How much does the nausea bother you? \\
Gastrointestinal & Reflux / Heartburn & How much does the reflux or heartburn bother you? \\
Gastrointestinal & Vomiting & How much does this vomiting bother you? \\
Pain & Pelvic Pain & How much does your pelvic or genital pain bother you? \\
Pain & Belly Pain & How much does your abdominal (belly) pain bother you? \\
Pain & Back Pain & How much does your back or spinal pain bother you? \\
Pain & Chest Pain & How much does your chest pain bother you? \\
Pain & Foot Pain & How much does your foot pain bother you? \\
Pain & Headaches & How much do your headaches bother you? \\
Pain & Joint Pain & How much does your joint pain bother you? \\
Pain & Mouth Pain & How much does your mouth pain bother you? \\
Pain & Muscle Pain & How much does your muscle pain bother you? \\
Pain & Soreness After Activities & How much does your soreness or fatigue after non-strenuous, everyday activities bother you? \\
Pain & Skin Pain & How much does your skin pain bother you? \\
Pain & Throat Pain & How much does your throat pain bother you? \\
Neurologic & Faint / Dizzy / Goofy & How much does feeling faint, dizzy, or goofy bother you? \\
Neurologic & Abnormal Movements & How much do your abnormal movements bother you? \\
Neurologic & Brain Fog & How much do your problems thinking or concentrating ("brain fog") bother you? \\
Neurologic & Numbness / Tingling / Burning & How much does your nerve numbness, tingling, or burning bother you? \\
Neurologic & Seizures & How much do your seizures bother you? \\
Neurologic & Sleep Problems & How much do your sleep problems bother you? \\
Neurologic & Tremors & How much do your tremors bother you? \\
Respiratory & Chronic Cough & How much does your persistent (chronic) cough bother you? \\
Respiratory & Shortness of Breath & How much does your shortness of breath bother you? \\
Respiratory & Wheezing & How much does your wheezing or whistling in your chest bother you? \\
Cardiovascular & Palpitations / Racing Heart & How much do your palpitations, racing heart, arrhythmia, or skipped beats bother you? \\
Cardiovascular & Swelling of Legs & How much does the swelling of your legs bother you? \\
Systemic & Excessive Thirst & How much does your excessive thirst bother you? \\
Systemic & Fever / Chills / Sweats & How much do your fever, chills, sweats (flu-like symptoms) bother you? \\
Systemic & Flushing & How much does your flushing bother you (a sudden feeling of warmth and reddening of the face)? \\
Systemic & Fatigue & How much does your fatigue bother you? \\
Systemic & Post-Exertional Malaise & How much does your post-exertional malaise bother you? \\
Systemic & Feeling Hot or Cold & How much does feeling hot or cold for no reason bother you? \\
Systemic & Weakness in Arms or Legs & How much does the weakness in your arms or legs bother you? \\
Systemic & Cold Limbs & How much does having cold limbs bother you? \\
Sensory & Smell / Taste Change & How much does your loss of or change in smell or taste bother you? \\
Sensory & Hearing Loss & How much do your problems with hearing (hearing loss or ringing in ears) bother you? \\
Sensory & Sinus Problems & How much does having a runny nose or sinus problems bother you? \\
Sensory & Smells Making You Sick & How much does having some smells, foods, medications, or chemicals making you feel sick bother you? \\
Sensory & Teeth Problems & How much do your problems with teeth bother you? \\
Sensory & Vision Problems & How much do your vision problems bother you? \\
Genitourinary & Bladder Problems & How much do your bladder problems bother you? \\
Genitourinary & Fertility Changes & How much do the changes in your fertility or difficulty getting pregnant bother you? \\
Genitourinary & Sex Changes & How much do your changes in desire for, comfort with, or capacity for sex bother you? \\
Dermatologic & Skin Color Change & How much does the change in your skin color bother you? \\
Dermatologic & Dry Eyes & How much do your excessively dry eyes bother you? \\
Dermatologic & Dry Mouth & How much does your excessively dry mouth bother you? \\
Dermatologic & Allergic Reactions & How much do your severe allergic reactions bother you? \\
Dermatologic & Hair Loss & How much does your hair loss bother you? \\
Dermatologic & Hives & How much do hives (skin redness or swelling) bother you? \\
Dermatologic & Itching & How much does itching bother you? \\
Dermatologic & Skin Rash & How much does your skin rash bother you? \\

\end{longtable}
}

\normalsize
\subsection{HPPCA Parameter Trends}
To evaluate whether the model captures more data variation when the latent dimension increases, we tracked the estimated parameters across varying latent capacities ($d_1 = d_2 \in \{1,\dots,10\}$). As shown in \Cref{fig:hppca_trends}, increasing the latent dimension monotonically decreases both the estimated measurement noise variance $\sigma^2$ (left panel) and the training reconstruction MSE (right panel), indicating that the latent components are successfully absorbing structured signals.

\begin{figure}[!htbp]
    \centering
    \includegraphics[width=\linewidth]{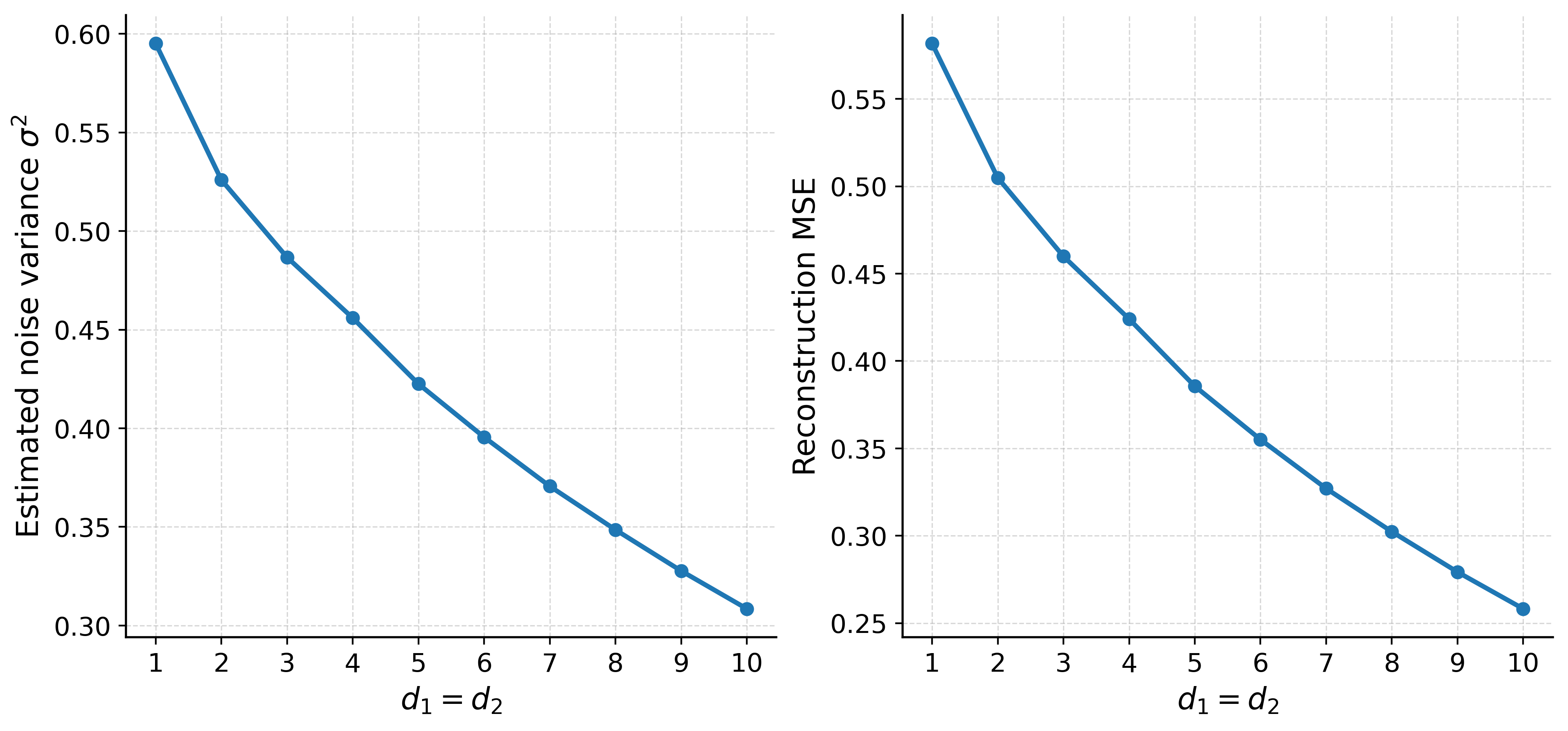}
    \caption{\textbf{HPPCA Parameter Estimates across Latent Dimensions.} Estimated isotropic noise variance $\sigma^2$ (left), and overall reconstruction MSE (right) as a function of the latent dimension capacity $d_1=d_2$.}
    \label{fig:hppca_trends}
\end{figure}

\subsection{Subspace Orthogonality Diagnostics}
To verify that HPPCA extracts diverse clinical features within each hierarchical level, we evaluated the pairwise angles between the internal column vectors of the estimated static loading matrix $\widehat{\bm{W}}_1$, and separately for the dynamic loading matrix $\widehat{\bm{W}}_2$, at $d_1 = d_2 = 10$. As shown in \Cref{fig:angle_heatmap}, these intra-matrix pairwise angles range between 55° and 90°. Because 90° indicates perfect orthogonality, this result confirms that the individual components within each loading matrix capture distinct, non-overlapping dimensions of data complexity, ensuring that the 10-dimensional latent capacities are utilized efficiently without component collapse.

\begin{figure}[htbp]
    \centering
    \begin{subfigure}{0.49\linewidth}
    \centering
\includegraphics[width=\linewidth]{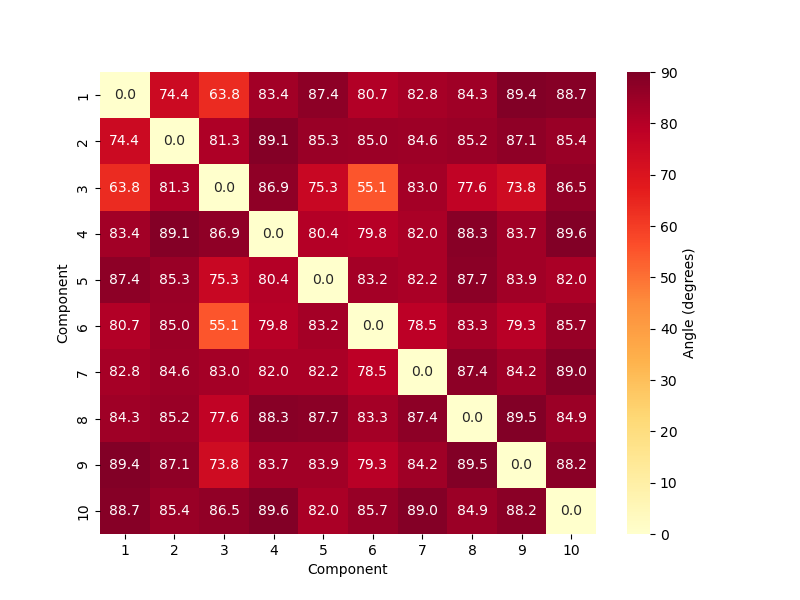}
    \caption{Angles Between $\bm{W}_1$ Loadings}
    \end{subfigure}
    \begin{subfigure}{0.49\linewidth}
    \centering
\includegraphics[width=\linewidth]{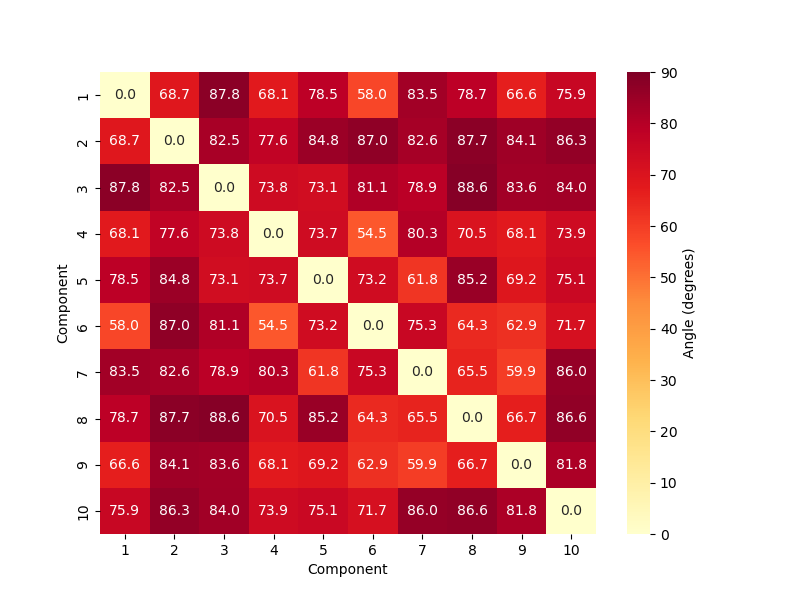}
    \caption{Angles Between $\bm{W}_2$ Loadings}
    \end{subfigure}
    \caption{\textbf{Angles Between Static and Dynamic Subspaces.} Heatmap of the pairwise angles (in degrees) between the $10$ components of the static loading matrix ($\bm{W}_1$) and the $10$ components of the dynamic loading matrix ($\bm{W}_2$) at $d_1=d_2=10$. Values near $90^\circ$ (dark red) indicate orthogonal, independent directions.}
    \label{fig:angle_heatmap}
\end{figure}

\subsection{Secondary Downstream Prediction Metrics}
\begin{figure}[htbp]
    \centering
    \includegraphics[width=0.9\linewidth]{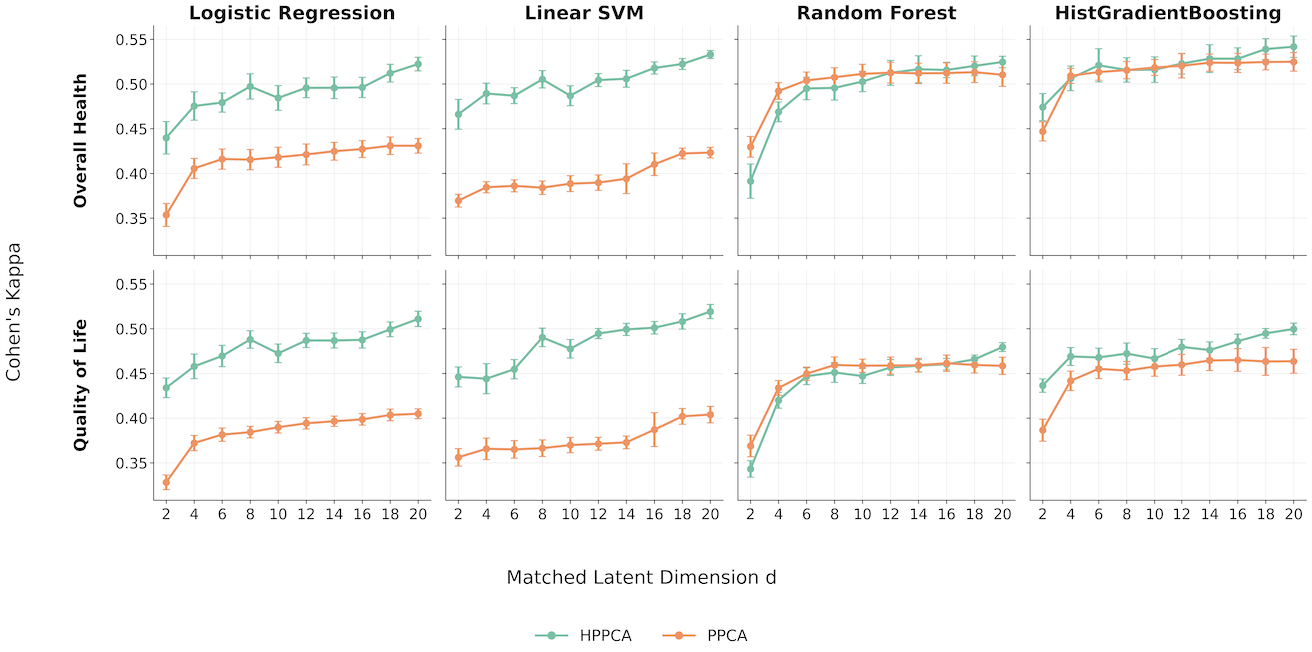}
    \caption{\textbf{Secondary Metric: Cohen's Kappa Score for Subject-Visit Level Prediction.} Cohen's kappa score for predicting the 5-category ordinal PROMIS-01 global health score at specific longitudinal visits. HPPCA joint embeddings ($\bm{Z}_{1,i} \oplus \bm{Z}_{2,ij}$) systematically outperform PPCA embeddings with a matched total rank. Error bars represent standard deviations across 5-fold cross-validation.}
    \label{fig:visit_kappa}
\end{figure}
In Section~4 of the main text, we presented balanced accuracy as the primary metric to evaluate the predictive performance of latent representations on two visit-level outcomes: overall health and quality of life. To provide a comprehensive evaluation, we also calculated Cohen's Kappa, as shown in \Cref{fig:visit_kappa}. For both outcomes, the joint embeddings of HPPCA yielded substantially higher Cohen's kappa scores across logistic regression, linear SVM, and gradient boosting models compared to PPCA. When using a random forest classifier, Cohen's kappa scores were similar between the two methods at lower matched dimensions, but increased more for HPPCA as the latent dimension ($d$) increased to 20. These findings are highly consistent with our balanced accuracy results, further validating our conclusions.

\bibliographystyle{plainnat} 
\bibliography{arXiv-version/combined_ref}
\end{document}